\title{Covering Segments with Unit Squares}
\author{Ankush Acharyya, Subhas C. Nandy, Supantha Pandit, and Sasanka Roy}
\date{Indian Statistical Institute, Kolkata, India.
  \texttt{\{ankush\_r,nandysc,sasanka\}@isical.ac.in, pantha.pandit@gmail.com}}
\newtheorem{lemma}{Lemma}
\newtheorem{theorem}{Theorem}
\newcommand{\colb}[1]{{\color{blue}{\textbf{\textit{#1}}}}}
\newcommand{\colbm}[1]{{\color{blue}{\boldmath{\textit{#1}}}}}
\newcommand{\colnew}[1]{{\color{orange!80!black}{\textbf{\textit{#1}}}}}
\newcommand{\probname}[1]{{\color{orange!80!black}{\textbf{\textsc{#1}}}}}
\newcolumntype{P}[1]{>{\centering\arraybackslash}p{#1}}
\newcolumntype{M}[1]{>{\centering\arraybackslash}m{#1}}
\newtheorem{definition}{Definition}
\newtheorem{observation}{Observation}
\newtheorem{remark1}{Remark}
\newcommand{\np}{$\mathsf{NP}$}
\newcommand{\apx}{$\mathsf{APX}$}
\newcommand{\ptas}{$\mathsf{PTAS}$}
\begin{document}

\maketitle

\vspace{-0.1in}
\begin{abstract}
We study several variations of line segment covering problem with axis-parallel unit squares in $I\!\!R^2$. A set $S$ of $n$ line segments is given. The objective is to find the minimum number of axis-parallel unit squares which cover at least one end-point of each segment. The variations depend on the orientation and length of the input segments. We prove some of these problems to be \np-complete,  and give constant factor approximation algorithms for those problems. For some variations, we have polynomial time exact algorithms. For the general version of the problem, where the segments are of arbitrary length and orientation, and the squares are given as input, we propose a factor 16 approximation result based on multilevel linear programming relaxation technique, which may be useful for solving some other problems. Further, we show that our problems have connections with the problems studied by Arkin et al. \cite{Arkin2015} on conflict-free covering problem. Our \np-completeness results hold for more simplified types of objects than those of Arkin et al. \cite{Arkin2015}.  

\vspace{.5cm}
\noindent {\bf Keywords:} Segment cover, unit square, \np-hardness,  linear programming, approximation algorithms, \ptas.
 \end{abstract}

 \vspace{-0.1in}
\section{Introduction}  \vspace{-0.1in} 
\label{intro}
Covering is a well-studied problem in computer science and has applications in diverse settings. Here a universe, and a collection of subsets of the universe are given as input. A minimum number of subsets need to be picked to cover the universe. The general version of this problem is $\mathsf{NP}$-complete \cite{Karp1972}. Many researchers studied different variants of this problem. In this paper, we study different interesting variations of line segment covering problem. 

The motivation of studying this problem comes from its applications to network security \cite{Kobylkin2016}. Here, the objective is to check the connectivity or security of a physical network. A set of physical devices is deployed over a geographical area. These devices are communicated to each other through physical links. The objective is to check the security of the network by placing minimum number of devices which can sense all the links whose at least one end-point lies inside a desired geometrical object (circle/square) centered around it. This problem can be modelled as {\it line segment covering problem}, where the links can be interpreted as segments and the objects can be interpreted as unit squares. Note that, the links are considered as straight lines. In \cite{Kobylkin2016}, several other applications are also stated.

Let $S=\{s_1,s_2,\ldots,s_n\}$ be a set of line segments in $I\!\!R^2$. We say that an axis-parallel square $t$ \colb{covers} a line segment $s \in S$ if $t$ contains at least one \colb{end-point} of $s$. In this paper, we deal with two classes of covering problem: i) \colb{continuous}, and ii) \colb{discrete}.

 \probname{Continuous Covering Segments by Unit Squares ({\textit{CCSUS}}):} Given a set $S$ of segments in $I\!\!R^2$, the goal is to find a set $T$ of unit squares which covers all segments in $S$, and the cardinality of the set $T$ is minimum among all possible sets of unit squares covering $S$.

 \probname{Discrete Covering Segments by Unit Squares ({\textit{DCSUS}}):} Along with the set $S$ of segments, here we are given a set $T$ of unit squares in $I\!\!R^2$, and the goal is to find a subset $T'\subseteq T$ of minimum cardinality which can cover all the segments in $S$.

We study the following variations of covering problem for line segments which are classified depending upon their lengths and orientations.

\noindent \colb{\textsc{\textbf{Continuous Covering}}}
\begin{itemize}
 \item[\colnew{$\blacktriangleright$}] \colnew{{\textit{CCSUS\text{-}H1\text{-}US}}:} Horizontal unit segments inside a unit height strip.
 \item[\colnew{$\blacktriangleright$}] \colnew{{\textit{CCSUS\text{-}H1}}:} Horizontal unit segments. 
 \item[\colnew{$\blacktriangleright$}] \colnew{{\textit{CCSUS\text{-}HV1}}:} Horizontal and vertical unit segments.
 \item[\colnew{$\blacktriangleright$}] \colnew{{\textit{CCSUS\text{-}ARB}}:} Segments with arbitrary length and orientation.
\end{itemize}
 \colb{\textsc{\textbf{Discrete Covering}}}
\begin{itemize}
% \item[\colnew{$\blacktriangleright$}] \colnew{{\textit{DCSUS\text{-}H1\text{-}US}}:} Horizontal unit segments inside a unit height strip.
% \item[\colnew{$\blacktriangleright$}] \colnew{{\textit{DCSUS\text{-}H1}}:} Horizontal unit segments.
 \item[\colnew{$\blacktriangleright$}] \colnew{{\textit{DCSUS\text{-}ARB}}:} Segments with arbitrary length and orientation.
\end{itemize}

We define some terminologies and definitions  used in this paper. We use \colb{segment} to denote a line segment, and \colb{unit square} to denote an axis-parallel unit square. For a given non-vertical segment $s$, we define \colbm{$l(s)$} and \colbm{$r(s)$} to be the left and right end-points of $s$. For a vertical segment $s$,  $l(s)$ and $r(s)$ are defined to be the end-points of $s$ with highest and lowest $y$-coordinates respectively. The \colb{center} of a square $t$ is the point of intersection of its two diagonals. We use \colbm{$t(a,b)$} to denote a square of side length $b$ and whose center is at the point $a$. Further, we define \colbm{$right$-$half(t(a,b))$} to be the portion of $t(a,b)$ to the right of the vertical line passing through the point $a$. 

\begin{definition} \label{def}
Two segments in $S$ are said to be \colb{independent} if no unit square can cover both the segments.  A subset $S'\subseteq S$ is said to be an \colb{independent set} if every pair of segments in $S'$ is independent. A subset $S'\subseteq S$ of segments is said to be \colb{maximal independent set} if for any $s\in S\setminus S'$, $S' \cup \{s\}$ is not an independent set.
\end{definition}

\vspace{-0.1in}
\subsection{Connection with the paper of Arkin et al. \cite{Arkin2015}} \vspace{-0.1in}

We point out an interesting connection between this paper and the paper of Arkin et al. \cite{Arkin2015}. They studied a family of covering problem, 
called the \colb{conflict-free covering}. Given a set $P$ of $n$ color classes, where each color class contains exactly two points, the goal is 
to find a set of conflict-free objects of minimum cardinality which covers at least one point from each color class. An object is said to be \colb{conflict-free} if it contains at most one point from each color class. They looked at both discrete (where the covering objects are given as a part of the input) and continuous (where the covering objects can be placed anywhere in the plane) versions of conflict-free covering problem. When the points are on real line and the covering objects are intervals, both discrete and continuous versions of conflict-free covering problem are studied by them. Further, when same colored points are either horizontally or vertically separated by unit distance and the covering objects are unit squares, they studied the continuous version of conflict-free covering problem.

Instead of line segments, if we consider a given set of colored line segments where each segment is of different color and restrict each unit square to be conflict-free, then our definition of covering all the objects is equivalent to the conflict-free covering problem of Arkin et al. \cite{Arkin2015}. 

\vspace{-0.1in}

\subsection{Known results} \vspace{-0.1in}

Arkin et al. \cite{Arkin2015-1,Arkin2015} showed that, both discrete and continuous versions of conflict-free covering problem are \np-complete where the points are on a real line and objects are intervals of arbitrary length. These results are also valid for covering arbitrary length segments on a line with unit intervals (see Appendix \ref{ABC}).  They provided factor $2$ and factor $4$ approximation algorithms for the continuous and discrete versions of conflict-free covering problem with arbitrary length intervals respectively. If points of same color class are either vertically or horizontally unit separated, then they proved that the continuous version of conflict-free covering problem with axis-parallel unit square is \np-complete and proposed a factor $6$ approximation algorithm. Finally, they remarked the existence of a polynomial time dynamic programming based algorithm for the continuous version of conflict-free covering problem with unit intervals where the points are on a real line and each pair of same color points is unit separated.
Recently, Kobylkin \cite{Kobylkin2016} studied the problem of covering the edges of a given straight line embedding of a planar graph by minimum number of unit disks, where an edge is said to be covered by a disk if any point on that edge lies inside that disk. They proved \np-completeness results for some special graphs. A similar study is made in \cite{Reddy2016}, where a set of line segments is given, the objective is to cover these segments with minimum number of unit disks, where the covering of a segment by a disk is defined as in \cite{Kobylkin2016}. They studied both the discrete and continuous versions of the problem. For continuous version, they proposed a \ptas~where the segments are non-intersecting. For discrete version, they showed that the problem is \apx-hard.

\vspace{-0.1in}

\subsection{Our contributions} \vspace{-0.1in} 

In Section \ref{continuous}, we first propose an $O(n \log n)$ time greedy algorithm for \textit{CCSUS-H1-US} problem. This is used to propose a factor 2 approximation result for the \textit{CCSUS-H1} problem.  

Arkin et al. \cite{Arkin2015} showed that continious version of conflict-free covering problem of points by unit squares is \np-complete, where each pair of same color points is either horizontally or vertically unit distance apart. They also proposed a factor $6$ approximation algorithm for this problem. We show that the 
\textit{CCSUS-H1} problem is \np-complete. Thus our \np-completeness reduction works for more simplified types of objects than those of Arkin et al. \cite{Arkin2015}. In addition, we propose  an $O(n \log n)$ time factor $3$ approximation algorithm for the \textit{CCSUS-HV1} problem.  Finally, we provide a \ptas~for \textit{CCSUS-HV1} problem. We also give an $O(n \log n)$ time factor 6 approximation algorithm for the \textit{CCSUS-ARB} problem.

In Section \ref{discrete}, we give a polynomial time factor 16 approximation algorithm for the \textit{DCSUS-ARB} problem. It uses multiple levels of LP-relaxation, and finally an LP-based approximation algorithm. This method is of independent interest since it may be used to get approximation algorithm for some other problem.  The running time of our algorithm fully depends on the running time of solving linear programs. Getting an algorithm  with approximation factor better than 16 for the \textit{DCSUS-ARB} problem remains an interesting open question.

\vspace{-0.2in}
\section{Continuous covering}\label{continuous} \vspace{-0.1in}
Here, the segments are given, and the objective is to place minimum number of unit squares for covering at least one end-point of all the segments.  

\vspace{-0.1in}

\subsection{\textit{CCSUS-H1-US} problem} \label{CCSUS-H1-US}  \vspace{-0.1in}

Below, we give an $O(n\log n)$ time greedy algorithm for the \textit{CCSUS-H1-US} problem. Let $S$ be a set of $n$ horizontal unit segments inside a horizontal unit strip. Start with an empty set $T'$. Sort the segments in $S$ from left to right with respect to their right end-points. Repeat the following steps until $S$ becomes empty. Select the first segment $s\in S$ which is not yet covered by the last added square in $T'$. Place a unit square $t$ inside the strip aligning its left boundary at $r(s)$, and mark all the segments that are covered by $t$. Put $t$ in $T'$. Finally, return  the set $T'$ as the output. Using standard analysis of covering points on real line by unit intervals (See Cormen et al. \cite{Cormen2009}, Exercise $16.2$-$5$), we can prove the following theorem.

\begin{theorem} \label{th1}
The worst case time complexity of our algorithm for the \textit{CCSUS-H1-US} problem 
is $O(n\log n)$.
\end{theorem}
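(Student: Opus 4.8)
The plan is to analyze the greedy algorithm described above in two parts: the correctness of the sorting preprocessing and the cost of the per-iteration work. First I would observe that since every segment in $S$ is a horizontal unit segment lying inside a fixed horizontal unit-height strip, a unit square placed anywhere inside the strip covers the end-point of a segment exactly when the square's horizontal extent contains the $x$-coordinate of that end-point. Hence the vertical dimension is irrelevant, and the problem collapses to the classical one-dimensional task of stabbing a set of points (namely, at least one of the two end-point $x$-coordinates of each segment) by unit-length intervals. Because each segment has unit length, ``covering'' segment $s$ by a square whose left boundary is at $r(s)$ is forced in the sense used in the standard greedy: once we must cover the not-yet-covered segment $s$ with the smallest $r(s)$, placing the square's left edge at $r(s)$ is the choice that extends coverage as far to the right as possible. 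This is precisely the exchange-argument structure of Exercise $16.2$-$5$ in Cormen et al.~\cite{Cormen2009}, so I would cite that for optimality and focus the written proof on the running time.

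For the time bound, the dominant cost is the initial sort of the $n$ segments by their right end-points, which takes $O(n\log n)$. After that, I would argue the main loop does $O(n)$ total work: maintain a pointer into the sorted list; in each iteration scan forward from the current pointer to find the first segment not covered by the most recently placed square, place a new square with its left boundary at that segment's right end-point, and continue scanning to mark every subsequent segment whose left end-point (equivalently, whose relevant end-point) falls within that square, advancing the pointer past all of them. Each segment is examined a constant number of times across the whole execution, so the loop contributes $O(n)$. Adding the sort gives the claimed $O(n\log n)$ worst-case bound.

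The only subtlety I anticipate is the bookkeeping of what ``not yet covered'' means when segments are processed in right-end-point order: a segment encountered later in the sorted order could in principle have its left end-point already inside an earlier-placed square, so one must be careful that the forward scan correctly skips such already-covered segments rather than spuriously opening a new square. I would handle this by noting that the square placed in response to segment $s$ covers a contiguous run of segments in the sorted order (those whose right end-points lie in $[r(s), r(s)+1]$, and possibly some with left end-point in that range), and that once we advance past this run we never need to revisit it; a short monotonicity argument shows the pointer only moves forward. This keeps the per-element cost constant and makes the $O(n\log n)$ total immediate, with correctness inherited from the standard interval-cover greedy.
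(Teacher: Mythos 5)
Your proposal is correct and follows essentially the same route as the paper, which likewise sorts by right end-points, places each square with its left boundary at $r(s)$ for the first uncovered segment $s$, and defers optimality to the standard unit-interval covering analysis (Cormen et al., Exercise 16.2-5), with the $O(n\log n)$ cost dominated by the sort. Your added observation that the segments covered by each placed square form a contiguous run in the sorted order (indeed, exactly those $s'$ with $r(s')\in[r(s),r(s)+2]$) is a nice explicit justification of the linear-time scan that the paper leaves implicit.
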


\vspace{-0.1in}

\subsection{\textit{CCSUS-H1}  problem}
\vspace{-0.1in}
Here,  we prove that \textit{CCSUS-H1} is \np-complete. Next, we propose an $O(n\log n)$ time factor 2 approximation algorithm for this problem.

\vspace{-0.1in}
\subsubsection{{\boldmath{\np}}-completeness}
\vspace{-0.1in}
The hardness of this problem is proved by a reduction from the rectilinear version of planar 3 SAT \colb{(RPSAT(3))} problem \cite{Knuth1992}, which is known to be \np-complete.

\probname{RPSAT(3) \cite{Knuth1992}:} Given a 3 SAT problem $\phi$ with $n$ variables and $m$ clauses, where the variables are positioned on a horizontal line and each clause containing 3 literals is formed with three vertical line segments and one horizontal line segments. Each clause is connected with its three variables either from above or from below such that two no two line segments corresponding to two different variables intersect. The objective is to find a satisfying assignment of $\phi$. See Figure \ref{defn-planar-3sat} for an instance of {\it RPSAT(3)} problem. Here the solid (resp. dotted) {\it vertical segment} attached to the horizontal line of a clause represents that the corresponding variable appears as a positive (resp. negative) literal in that clause.

\begin{figure}[htbp]
%\begin{center}
\begin{subfigure}[t]{1in}
\centering
\includegraphics[scale=0.35]{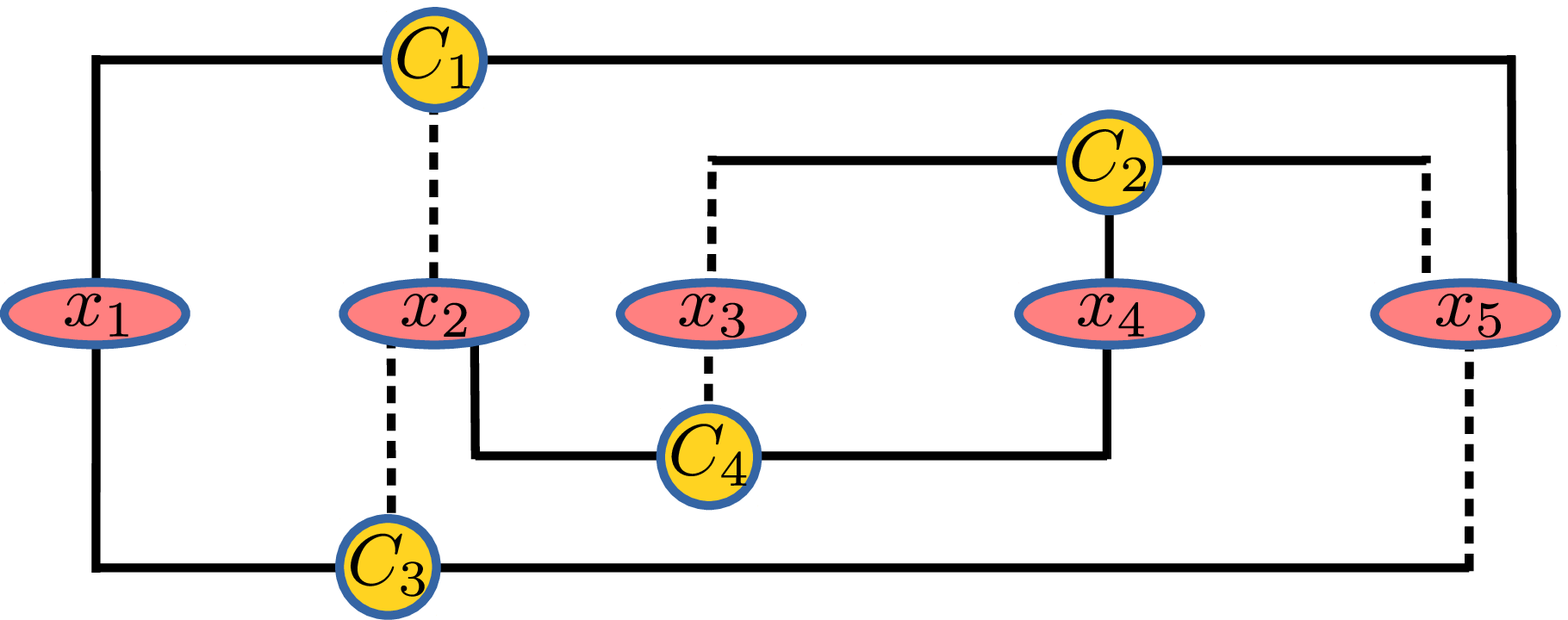}
\caption{ }
\label{defn-planar-3sat}
\end{subfigure}
\hspace{7.5cm}
\begin{subfigure}[t]{1in}
\centering
\includegraphics[scale=.35]{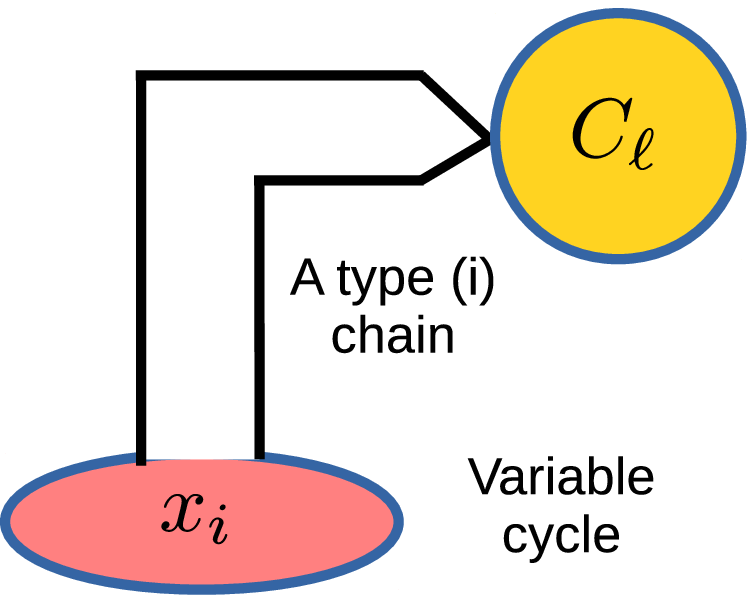}
\caption{ }
\label{fig-a-chain}
\end{subfigure}
\caption{(a) {\it RPSAT(3)} representation.  (b) Connection of a cycle and a chain.}
\label{fig-defn-planar-3-sat}
\vspace{-0.1in}
\end{figure}

We first describe the construction of an instance $I$ of \textit{CCSUS-H1} problem from an instance $\phi$ of {\it RPSAT(3)} problem. Next we validate the construction. 

Let $\{x_1, x_2, \ldots, x_n\}$ be $n$ variables and $\{C_1,C_2,\ldots,C_m\}$ be $m$ clauses of $\phi$. Here we describe the construction for the clauses connecting to the variables from above. A similar construction can be done for the clauses connecting to the variables from below. 

Let $d$ be the maximum number of vertical segments connected to a single variable from different clauses either from above or from below. Assume, $\delta = 4d + 3$.
Each variable gadget for $x_i$ may consist of a single \colb{cycle} and at most $2d$ number of \colb{chains}.  The {\it cycle} consists of $2\delta$ unit horizontal segments $\{s_1^i,s_2^i,\ldots,s_{2\delta}^i\}$ in two sides of a horizontal line (see Figure \ref{fig-var-gadget}). The segments $\{s_1^i,s_2^i,\ldots,s_{\delta}^i\}$ are above the horizontal line and the segments $\{s_{\delta + 1}^i,s_{\delta +2}^i,\ldots,s_{2\delta}^i\}$ are below the horizontal line. 
The chains correspond to the vertical segments connecting a variable $x_i$ with the clause containing it. There are three types of \colb{chains}: (i) `` \includegraphics[scale=.06]{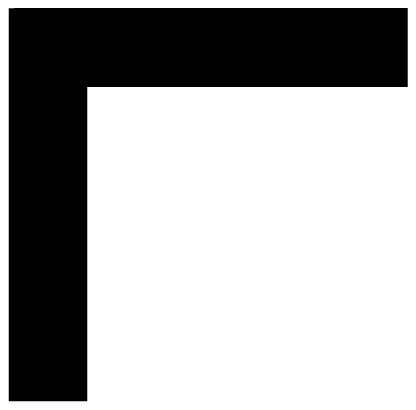} '', (ii) `` \includegraphics[scale=.06]{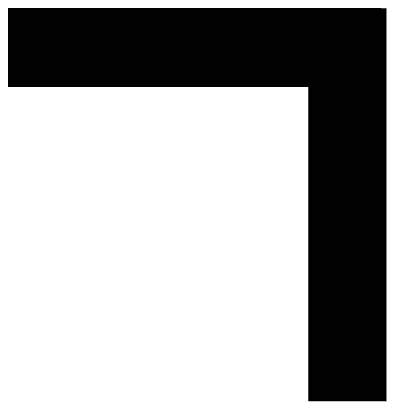} '', and (iii) `` \includegraphics[scale=.06]{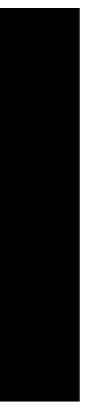} '' (see Figure \ref{defn-planar-3sat}). The gadget corresponding to three types of chains are shown in Figures \ref{chain-1}, \ref{chain-2}, and \ref{chain-3} respectively. The chains are connected to the cycle, and together it forms a chain of \colb{big-cycle} (see Figure \ref{fig-a-chain}).   It needs to mention that the number of segments is not fixed for every chain, even  for similar chains of different clauses. Note that, at the joining point (to construct a big-cycle) we slightly \colb{perturb}  two unit segments little upward.

\begin{figure}[t]
\begin{center}
\includegraphics[scale=.25]{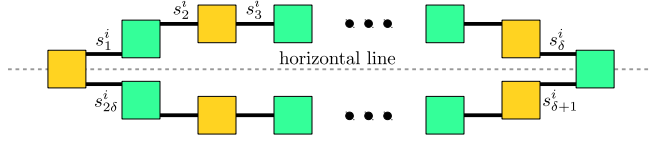}
\end{center}\vspace{-0.1in}
\caption{ Gadget for a variable $x_i$.}
\label{fig-var-gadget}
\vspace{-0.1in}
\end{figure}

\begin{figure}[h]
%\begin{center}
\begin{subfigure}[t]{1in}
\centering
\includegraphics[scale=.23]{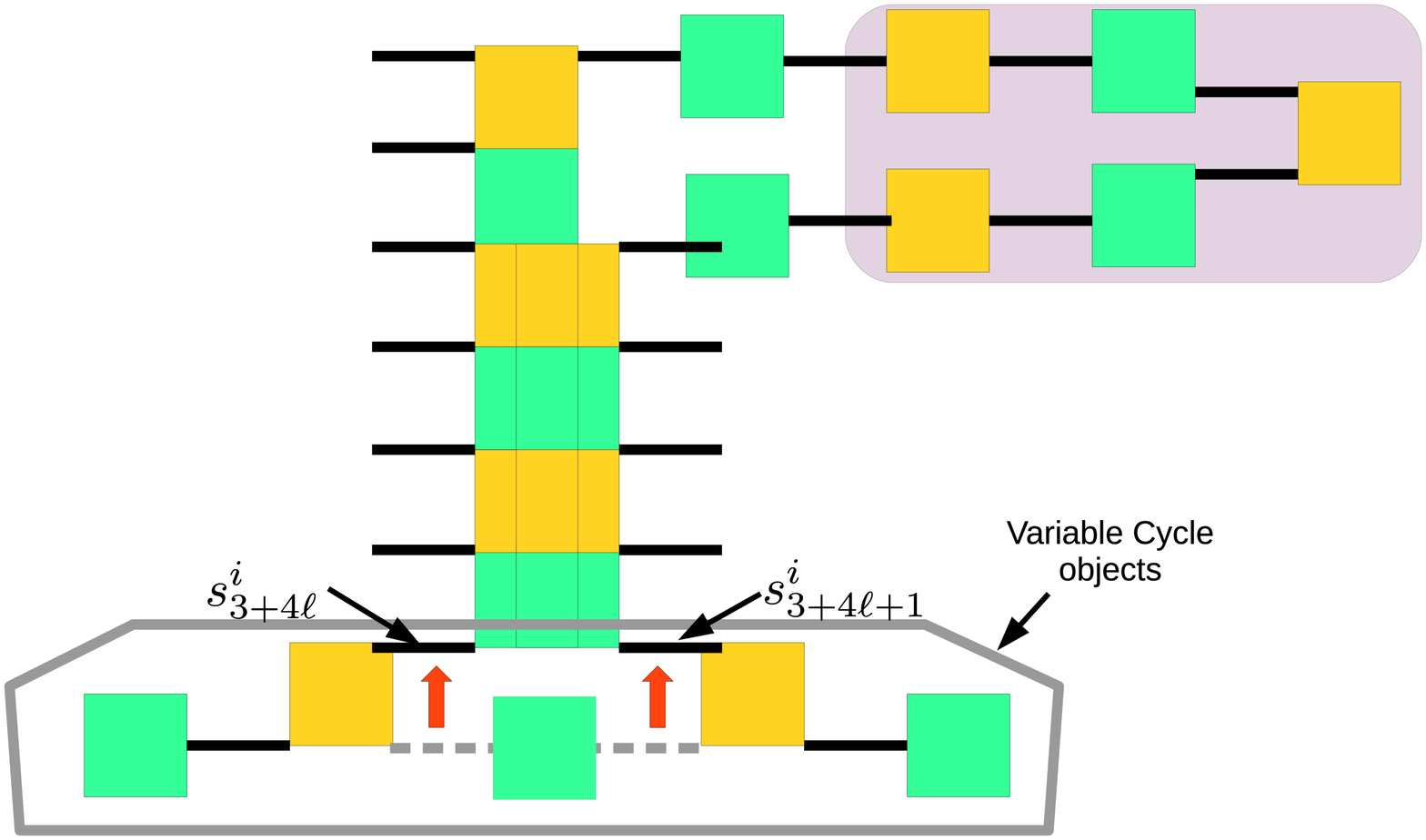}
\caption{ }
\label{chain-1}
\end{subfigure}
\hspace{5cm}
\begin{subfigure}[t]{1in}
\centering
\includegraphics[scale=.23]{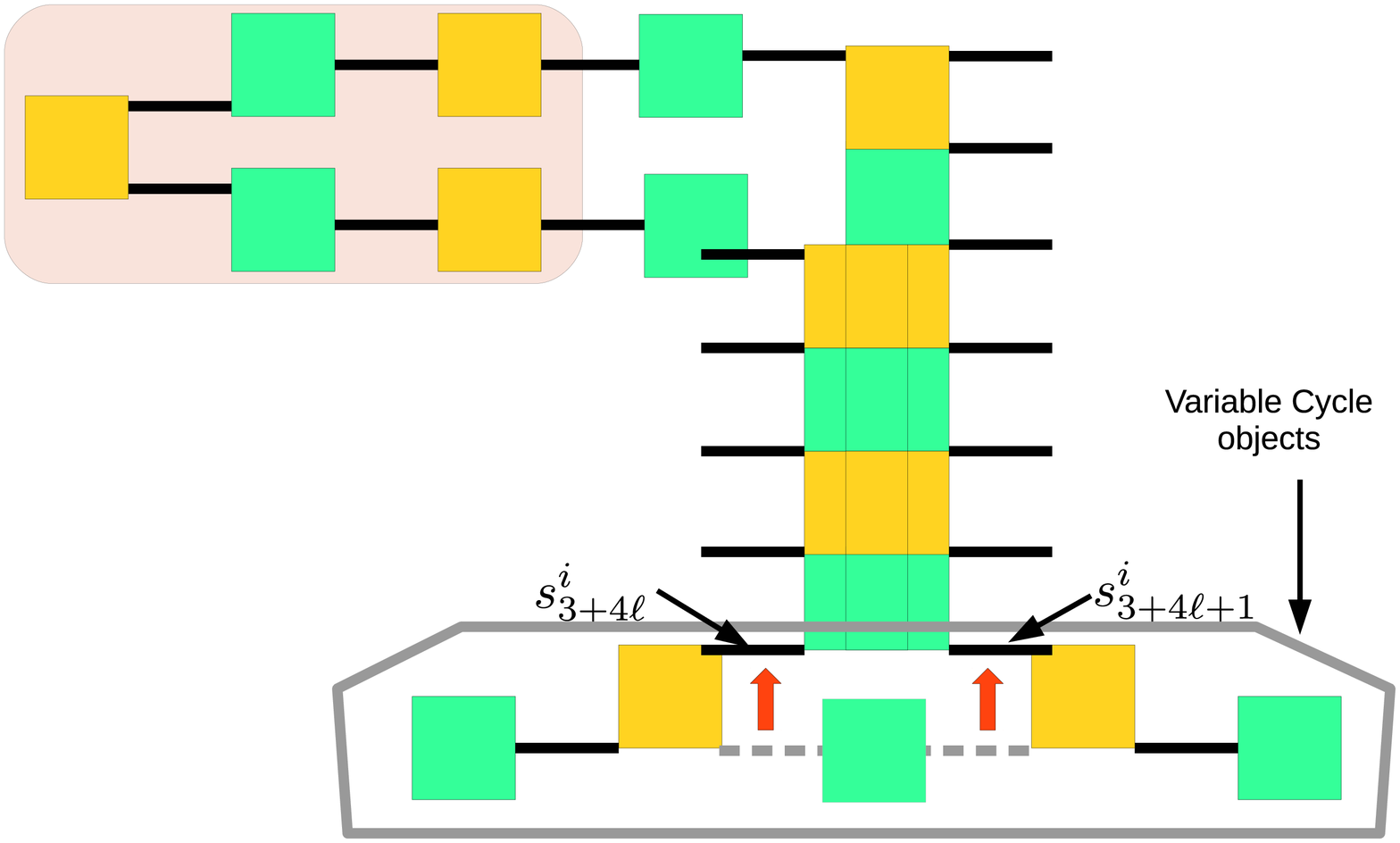}
\caption{ }
\label{chain-2}
\end{subfigure}

\begin{subfigure}[t]{1in}
\includegraphics[scale=.23]{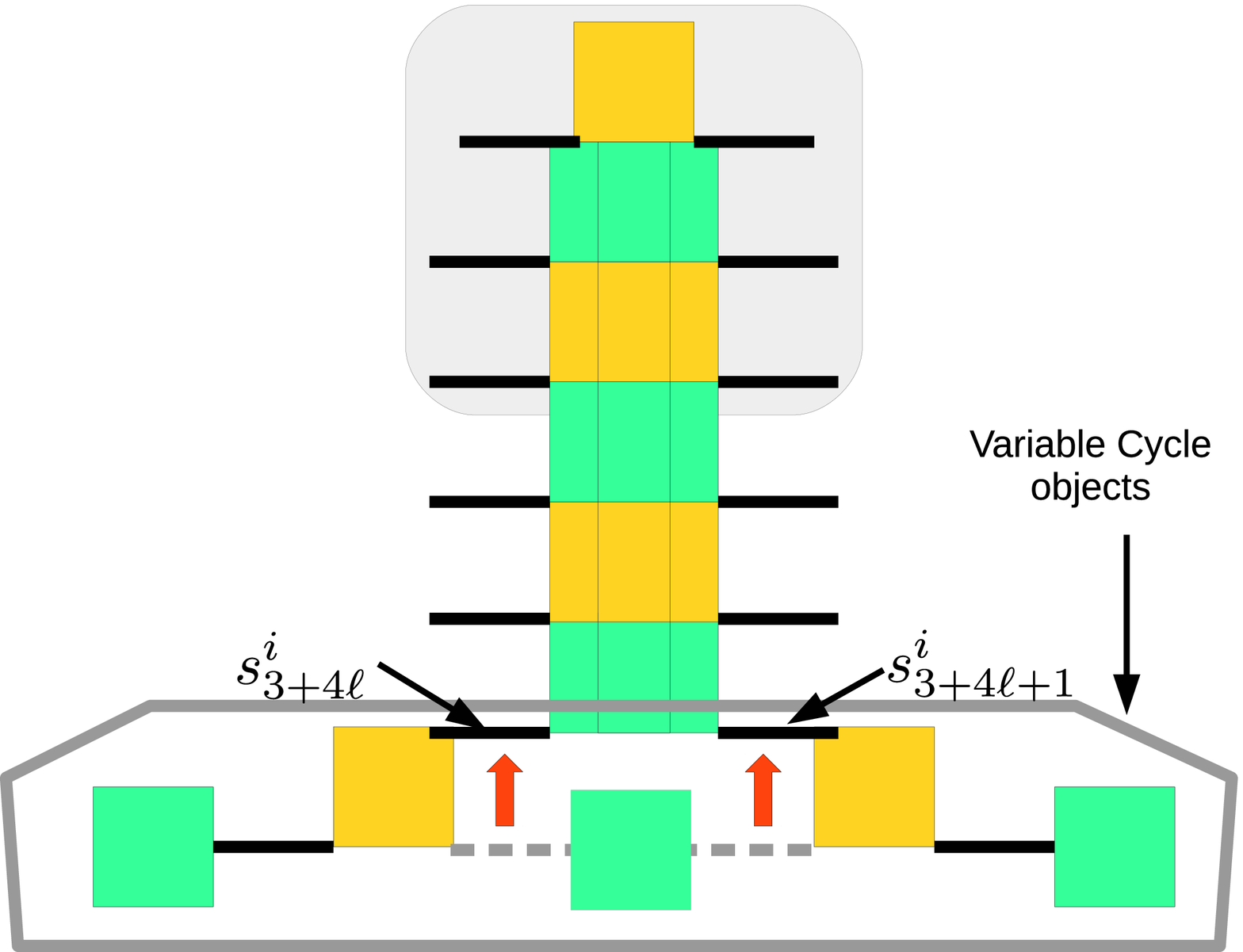}
\caption{ }
\label{chain-3}
\end{subfigure}
\hspace{5cm}
\begin{subfigure}[t]{1in}
\includegraphics[scale=.23]{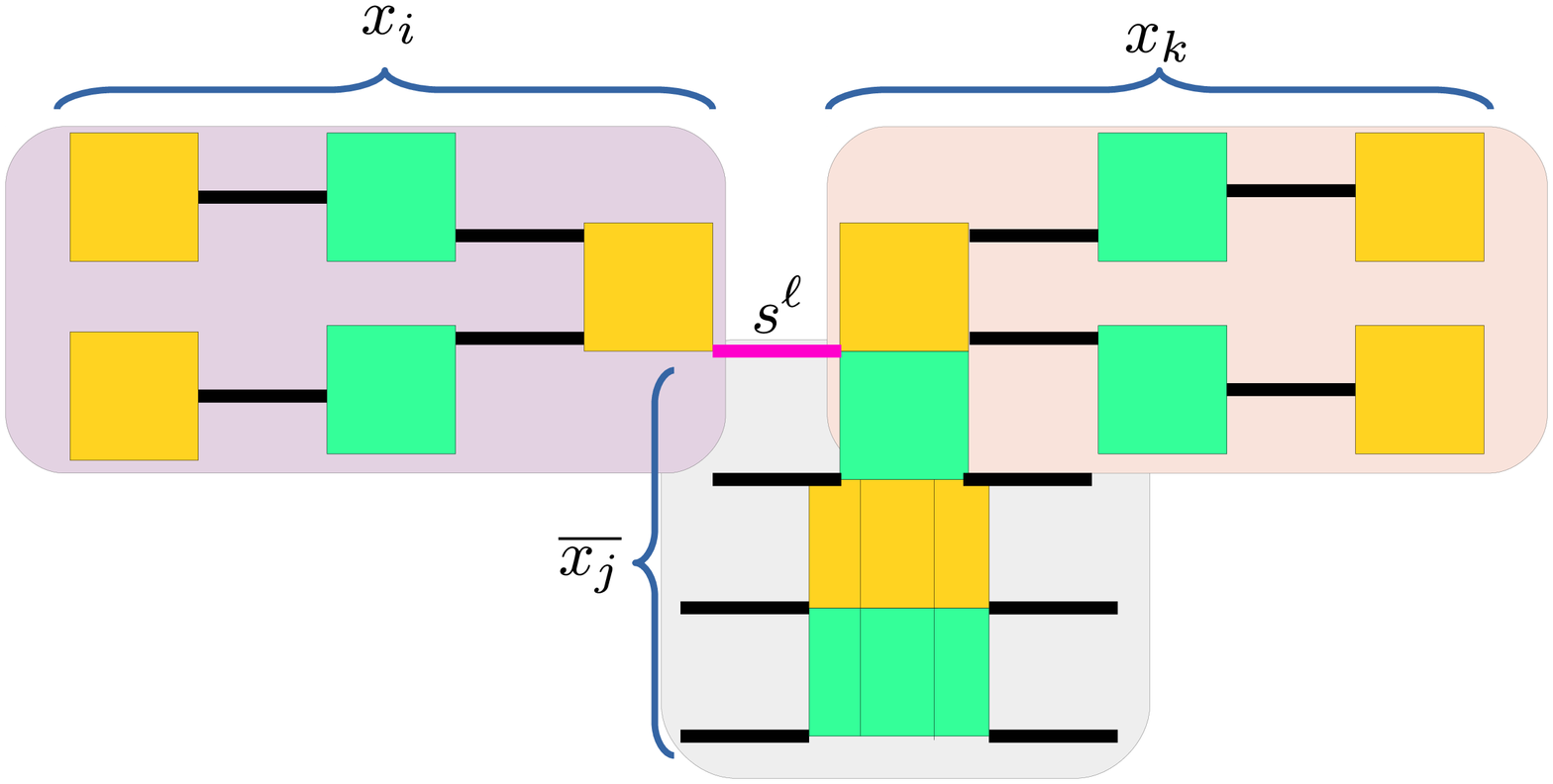}
\caption{ }
\label{chain-clause-junction}
\end{subfigure}
%\end{center}
\caption{(a) Gadget for type (i) chain.  (b) Gadget for type (ii) chain. (c) Gadget for type (iii) chain. (d) Demonstration of clause-segment $s^\ell$ corresponding to the clause $C_\ell = (x_i\vee \overline{x_j} \vee x_k)$; here the shaded portions from the parts (a), (b) and (c) are shown to demonstrate the connection of $s^\ell$ with the variables in $C_\ell$.}
\label{chains}
\vspace{-0.1in}
\end{figure}

Let $0,1,2,\ldots,\kappa$ ($\kappa \leq d$) be the left to right order of the vertical segments corresponding to the clauses which are connected to the gadget 
corresponding to the variable $x_i$. Consider the $\ell$-th clause $C_\ell$ in this order. If $x_i$ is a positive literal, then the segments $s_{3+4\ell}^i$ and $s_{3+4\ell+1}^i$ are perturbed (moved upward as shown using upward arrow in Figures \ref{chain-1}, \ref{chain-2}, \ref{chain-3}) to connect the corresponding chain of $C_\ell$ with the cycle of variable $x_i$. Otherwise, If $x_i$ is a negative literal, then the segments $s_{3+4\ell+1}^i$ and $s_{3+4\ell+2}^i$ are perturbed.

Note that, the squares are not given as a part of the input. In the Figures \ref{fig-var-gadget}, \ref{chain-1}, \ref{chain-2}, and \ref{chain-3} a possible set of unit squares are also depicted. Each square can cover exactly two segments. Therefore, we have the following observation:
\begin{observation}\label{obb}
Exactly half of the squares (either all {\it green} or all {\it yellow}) can cover all the segments in the big-cycle corresponding to the variable $x_i$. This solution represents the truth value ({\it yellow} for true and {\it green} for false) of the corresponding variable $x_i$. 
\end{observation}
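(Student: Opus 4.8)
The plan is to analyze the geometric structure of the big-cycle of $x_i$, which consists of the $2\delta$ unit horizontal segments of the cycle together with the unit horizontal segments of the (at most $2d$) chains, glued at the perturbation points. First I would argue that the big-cycle forms a single closed cyclic sequence of unit segments $\sigma_1, \sigma_2, \ldots, \sigma_{2k}$ (reindexing around the cycle), where consecutive segments $\sigma_j, \sigma_{j+1}$ (indices mod $2k$) are placed so that exactly one unit square can simultaneously cover an end-point of $\sigma_j$ and an end-point of $\sigma_{j+1}$, while no unit square covers end-points of two segments that are non-consecutive in this cyclic order. This is precisely the way the gadgets in Figures~\ref{fig-var-gadget}, \ref{chain-1}, \ref{chain-2}, \ref{chain-3} are drawn: horizontally adjacent segments overlap just enough to share a covering square, and the perturbation of two segments at each junction preserves this ``consecutive-pairs-only'' property when a chain meets the cycle.

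Next I would exploit this structure. Since the covering relation restricted to the big-cycle is exactly the cycle graph $C_{2k}$ on an even number of vertices, any unit square useful for covering big-cycle segments covers at most two of them, and those two must be cyclically consecutive. Hence covering all $2k$ segments requires at least $k$ squares (each segment needs a square, each square handles at most two, so $\ge 2k/2 = k$), and a set of $k$ squares covering all of them must be a perfect matching of $C_{2k}$ by ``consecutive pairs.'' A cycle of even length has exactly two perfect matchings: the one pairing $(\sigma_1,\sigma_2),(\sigma_3,\sigma_4),\ldots$ and the one pairing $(\sigma_2,\sigma_3),(\sigma_4,\sigma_5),\ldots,(\sigma_{2k},\sigma_1)$. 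These are the ``green'' and ``yellow'' square sets depicted in the figures; I would check against the construction that each such matching is realizable by actual unit squares (it is, since each consecutive pair admits a covering square). Therefore exactly half the squares in either depicted set suffice, and no smaller set works.

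Finally I would tie the two matchings to the truth value. The perturbation rule states that for the $\ell$-th clause on $x_i$, segments $s^i_{3+4\ell}, s^i_{3+4\ell+1}$ are raised if $x_i$ appears positively and $s^i_{3+4\ell+1}, s^i_{3+4\ell+2}$ are raised if it appears negatively; this shifts by one the parity of which chain segments get ``connected'' to the cycle, so that the chain of a positive occurrence is forced to interface with the ``yellow'' (true) matching and a negative occurrence with the ``green'' (false) matching — exactly so that the chosen matching propagates consistently from the cycle out along every chain to the clause-segment. I would make this precise by following a single chain segment by segment from its cycle endpoint to its clause endpoint and checking that the matching chosen at the cycle uniquely determines the matching along the whole chain; the offset built into the perturbation then makes ``yellow everywhere'' correspond to setting $x_i$ true and ``green everywhere'' to false.

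The main obstacle I anticipate is the bookkeeping at the junctions: verifying rigorously that the perturbation of the two designated unit segments where a chain attaches to the cycle (i) does not create any new unit square covering a non-consecutive pair, (ii) does not destroy the unique covering square of any consecutive pair, and (iii) correctly transmits the parity so that a global matching exists only in the two claimed configurations. Everything else is a short counting argument about perfect matchings of an even cycle; the delicate part is the local geometry of the ``slightly perturbed'' segments, which must be pinned down with explicit coordinates from the construction rather than read off the figures.
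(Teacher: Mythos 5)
Your proposal is correct and follows essentially the same route as the paper, which justifies the observation only by noting that each depicted square covers exactly two segments and by appeal to the figures; your perfect-matching-of-an-even-cycle argument (each square covers at most two big-cycle segments, only cyclically consecutive pairs are co-coverable, hence a cover of size $Q_i/2$ must be one of the two perfect matchings of the even cycle) is the rigorous form of exactly that reasoning. You also correctly isolate the one point the paper leaves implicit, namely the coordinate-level check that the perturbed segments at each cycle--chain junction neither create non-consecutive coverable pairs nor break the parity transfer to the clause segment.
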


Further, for the clause $C_\ell$, we take a single unit horizontal segment $s^\ell$ that connects the chain corresponding to three variables. This is referred to as a \colb {clause-segment}. The placement of $s^\ell$ is shown in Figure \ref{chain-clause-junction}. Note that, in order to maintain the alternating green and yellow vertical layers in a variable gadget we may need to reduce the distance between two consecutive vertical layers of squares. But, the segments are placed sufficiently apart so that no unit square can cover more than two segmnts from a variable gadget. As the number of segments ($Q$) considering all variable gadgets, is even, we need exactly $\frac{Q}{2}$ squares to cover them. Now, if a clause $C_\ell$ is {\it satisfiable} then at least one square connected to $s^\ell$ will be chosen, and hence $s^\ell$ will be covered; if $C_\ell$ is {\it not satisfiable} then the square adjacent to $s^\ell$ of each variable chain will not be chosen in the solution, and hence we need one more square to cover $s^\ell$ (see Figure \ref{chain-clause-junction}). Thus, we have the following result, which leads to Theorem \ref{thh}.
\begin{lemma}
The given {\it RPSAT(3)} formula is satisfiable if  the number $N$ of squares needed to cover all the unit segments in the construction is exactly $N_0=\frac{1}{2}(\sum_{i=1}^n Q_i)$,  where $Q_i$ is the number of squares in the big-cycle corresponding to the gadget of the variable $x_i$. If the formula is not satisfiable then $N > N_0$, 
\end{lemma}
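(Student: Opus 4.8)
The plan is to establish the biconditional by exhibiting a truth assignment $\leftrightarrow$ optimal-size covering correspondence, handling the two directions separately and using Observation~\ref{obb} as the engine for the variable gadgets. First I would argue the "only if" direction: given a satisfying assignment of the \textit{RPSAT(3)} formula $\phi$, for each variable gadget $x_i$ pick the \emph{yellow} set of squares if $x_i$ is set to true and the \emph{green} set if it is set to false. By Observation~\ref{obb}, each such choice uses exactly $Q_i/2$ squares and covers every segment of the big-cycle of $x_i$; summing over all variables gives $N_0 = \frac12\sum_{i=1}^n Q_i$ squares, and these cover everything \emph{except possibly} the clause-segments $s^\ell$. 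For each clause $C_\ell$, since the assignment is satisfying, some literal in $C_\ell$ is true; by the construction of Figure~\ref{chain-clause-junction} (and the perturbation rule tying the parity of which layers are raised to whether the literal is positive or negative), the square of that variable's chain adjacent to $s^\ell$ is exactly the one chosen by our yellow/green selection, so $s^\ell$ is already covered. Hence $N \le N_0$, and since every square covers at most two segments and the total number $Q$ of variable-gadget segments is even, $N_0$ squares is the absolute minimum needed just for those segments, so $N = N_0$.

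For the "if" direction I would prove the contrapositive in the sharp form stated: if $\phi$ is \emph{not} satisfiable then $N > N_0$. Suppose for contradiction that some covering uses only $N_0 = \frac12\sum Q_i$ squares. Since the $Q = \sum Q_i$ big-cycle segments already force $Q/2 = N_0$ squares (each covering $\le 2$ of them, and the segments of distinct variable gadgets are placed far enough apart — by the spacing remark and $\delta = 4d+3$ — that no square straddles two gadgets, nor covers more than two segments of one gadget), every one of the $N_0$ squares must cover exactly two big-cycle segments and none may be "wasted." The key structural claim is then that within a single big-cycle, the only way a perfect pairing-by-squares can be realized is one of the two alternating patterns (all yellow or all green) — this is where the cyclic structure of the big-cycle does the work, forcing a global consistent choice per variable, i.e. a well-defined truth assignment $\tau$. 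Under $\tau$, no extra square is available to cover any $s^\ell$, so each $s^\ell$ must be covered incidentally by a chosen big-cycle square; by the construction this happens for $C_\ell$ only if $\tau$ makes some literal of $C_\ell$ true. Since all $s^\ell$ are covered, $\tau$ satisfies $\phi$ — contradiction. Therefore $N > N_0$.

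The main obstacle I expect is the structural claim in the "if" direction: rigorously showing that an $N_0$-square covering is forced to decompose, gadget by gadget, into one of exactly two alternating configurations, with the chains ("arms" leading to clause-segments) inheriting a consistent color. This requires a careful local case analysis along each chain and around each perturbed joint — verifying that once two consecutive segments are paired by a square, the choice propagates uniquely around the cycle, that the type-(i)/(ii)/(iii) chain gadgets of Figures~\ref{chain-1}--\ref{chain-3} admit no alternative pairing, and that the perturbation at the joints does not create a spurious square covering a cycle segment together with a chain or clause segment. One also has to confirm the counting bookkeeping: that $Q$ is even (guaranteed since each cycle contributes $2\delta$ and each chain is built with an even number of segments by construction), so that $\frac{Q}{2}$ is an integer and genuinely a lower bound via the "each square covers $\le 2$ segments" packing argument. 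The remaining steps — the clause-segment coverage logic and assembling the biconditional into Theorem~\ref{thh} — are then routine given the figures.
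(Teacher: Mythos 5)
Your proposal is correct and follows essentially the same route as the paper, which justifies this lemma only through the informal paragraph preceding it (Observation~\ref{obb} on the two alternating yellow/green configurations, the parity count $N_0=\frac{Q}{2}$, and the clause-segment coverage argument). Your treatment is in fact more careful than the paper's own: you explicitly isolate the forcing claim --- that any cover of size $N_0$ must decompose, gadget by gadget, into exactly one of the two alternating configurations, thereby inducing a well-defined truth assignment --- which the paper asserts without argument, and you correctly identify the chain joints and perturbed segments as the places where that claim needs local verification.
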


\begin{theorem}\label{thh}
\textit{CCSUS-H1} is \np-complete.
\end{theorem}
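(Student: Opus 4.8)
The plan is to prove membership in \np{} and \np-hardness separately. For membership, I would first argue that \textit{CCSUS-H1} has a \emph{discretized} optimum: given any feasible family of unit squares, each square can be translated — without uncovering any endpoint it currently covers — until its left edge lies on the vertical line through some segment endpoint and its bottom edge lies on the horizontal line through some segment endpoint. Hence there are only $O(n^2)$ candidate square positions, so a nondeterministic machine can guess a size-$k$ subset of them and check in polynomial time that every segment has an endpoint inside some guessed square; thus \textit{CCSUS-H1} $\in \np$.

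For hardness, the reduction from RPSAT(3) is the construction already described: variable gadgets (a \emph{cycle} of $2\delta$ unit segments with $\delta = 4d+3$, together with up to $2d$ chains, glued into a \emph{big-cycle}), plus one clause-segment $s^\ell$ per clause, handled analogously for clauses attached from above and from below. First I would check that the resulting instance $I$ has size polynomial in $n+m$: each variable touches at most $d \le 3m$ clause segments, each chain contributes $O(m)$ unit segments, so the total number of segments is $\big(\sum_i Q_i\big) + m = \mathrm{poly}(n,m)$, with coordinates being rationals of polynomially bounded bit-length. The correctness is then exactly the statement of the preceding Lemma: $\phi$ is satisfiable iff the optimum $N$ equals $N_0 = \tfrac{1}{2}\sum_{i=1}^n Q_i$. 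To justify that Lemma I would (i) establish Observation \ref{obb} — in a big-cycle of $Q_i$ segments, the spacing guarantees a unit square covers at most two of them, $Q_i/2$ squares are necessary, and exactly two "phases" of $Q_i/2$ squares each attain the bound, the \emph{yellow} phase encoding $x_i = \mathrm{true}$ and the \emph{green} phase $x_i = \mathrm{false}$; (ii) show that the alternating chain gadgets (Figures \ref{chain-1}--\ref{chain-3}) force a single phase along the whole big-cycle, so a minimum cover restricted to the variable gadgets reads off a consistent truth assignment; (iii) show, via the junction in Figure \ref{chain-clause-junction}, that $s^\ell$ is covered "for free" by a gadget square iff at least one of its three variables is in the phase matching that literal's polarity, i.e.\ iff $C_\ell$ is satisfied, and otherwise costs one extra square; and (iv) argue that no unit square can cover segments from two different variable gadgets, nor $s^\ell$ together with two gadget segments, so the total cost decomposes additively as $\sum_i Q_i/2$ plus the number of unsatisfied clauses.

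The hard part will be the geometric bookkeeping of the perturbations. I need to choose the upward shifts of the two designated segments $s^i_{3+4\ell}, s^i_{3+4\ell+1}$ (positive literal) or $s^i_{3+4\ell+1}, s^i_{3+4\ell+2}$ (negative literal), together with the horizontal spacings between consecutive vertical layers of squares — which, as the construction notes, may have to be compressed to keep the green/yellow layers strictly alternating — so that three requirements hold simultaneously: within any variable gadget and at every chain–clause junction no unit square covers three or more segments; both global phases remain realizable by exactly $Q_i/2$ squares each; and the clause-segment $s^\ell$ is reachable from the appropriate "adjacent" square of each of its three variable chains without creating any spurious double coverage. Showing these constraints are jointly satisfiable with polynomially-bounded rational coordinates — essentially, that Figures \ref{fig-var-gadget} and \ref{chains} can be realized exactly — is the crux; the rest is a routine counting argument built on Observation \ref{obb}, after which Theorem \ref{thh} follows by combining \np-membership with the reduction.
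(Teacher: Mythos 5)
Your proposal follows essentially the same route as the paper: the same reduction from RPSAT(3) using cycle-plus-chain variable gadgets glued into big-cycles, clause segments at the junctions, and the counting criterion $N=N_0=\frac{1}{2}\sum_i Q_i$ iff $\phi$ is satisfiable; your added explicit discretization argument for \np-membership and your honest flagging of the perturbation/spacing bookkeeping as the crux are reasonable supplements to what the paper leaves at the level of figures. No substantive divergence or gap relative to the paper's own (admittedly sketchy) proof.
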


\subsubsection{Appriximation algorithm} \label{CCSUS-H1}

Let $S$ be a set of unit horizontal segments on the plane. We first partition the whole plane into a set of $\ell$ disjoint  unit height horizontal strips $H_1, H_2, \ldots, H_\ell$. Let $S_i \in S$ be the set of segments in the strip $H_i$, for $i=1, \ldots, \ell$. Clearly, $S_i\cap S_j =\emptyset$, for $i \neq j$. Now we have the following observation. 

\begin{observation} Any unit square cannot cover two segments, one from $S_i$ and the other from $S_j$ where $j-i\geq 2$, where $j>i, \text{ for } i= 1, \ldots,\ell-2, \text{ and }j= 3, \ldots, \ell$. 
\end{observation}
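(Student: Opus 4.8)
The plan is to show that a unit square, being of side length exactly $1$, cannot span three (or more) consecutive unit-height strips, and therefore cannot simultaneously contain an end-point of a segment from $S_i$ and an end-point of a segment from $S_j$ whenever the strips $H_i$ and $H_j$ are separated by at least one intermediate strip. The key quantitative fact is that the vertical extent of a unit square is $1$, which is exactly the height of a single strip; hence the set of $y$-coordinates covered by any unit square meets at most two of the pairwise-disjoint strips $H_1, \ldots, H_\ell$, and those two must be consecutive.

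First I would set up coordinates: say strip $H_i$ occupies the horizontal slab $\{(x,y) : (i-1) \le y < i\}$ (or however the partition into $\ell$ disjoint unit-height strips in the construction is indexed), so that the strips are pairwise disjoint and $H_i, H_{i+1}$ share the boundary line $y = i$. Take an arbitrary unit square $t$ and let $[y_0, y_0+1]$ be the projection of $t$ onto the $y$-axis. Any end-point of a segment lying in $t$ must have its $y$-coordinate in $[y_0, y_0+1]$. Now observe that an interval of length $1$ can intersect the half-open intervals $[(k-1),k)$ for at most two consecutive values of $k$: if it met $[(k-1),k)$ and $[(k'-1),k')$ with $k' \ge k+2$, it would have to contain a point with $y < k$ and a point with $y \ge k' - 1 \ge k+1$, forcing its length to exceed $1$, a contradiction. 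Consequently the end-points of segments contained in $t$ all lie in $S_i \cup S_{i+1}$ for some single $i$, so $t$ cannot cover one segment from $S_i$ and another from $S_j$ with $j - i \ge 2$.

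The main obstacle here is not mathematical depth — the argument is essentially a one-line length estimate — but rather making sure the boundary conventions are handled cleanly: whether strips are closed, half-open, or the segments are assumed (as in the construction) to lie strictly inside strips after the perturbation step. I would state explicitly that the strips are taken half-open (or that segment end-points never lie exactly on strip boundaries, which the perturbation of the segments ensures), so that "a segment belongs to $S_i$" is unambiguous and the projection argument goes through without edge cases. With that convention fixed, the observation follows immediately from the fact that a unit square has vertical diameter exactly $1$.
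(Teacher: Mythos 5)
Your argument is correct and is exactly the justification the paper leaves implicit (the observation is stated without proof): the $y$-projection of a unit square is an interval of length $1$, which can meet at most two consecutive half-open unit-height strips, so the square cannot contain end-points of horizontal segments assigned to strips $H_i$ and $H_j$ with $j-i\ge 2$. Your explicit attention to the half-open boundary convention is a sensible precaution and does not change the substance.
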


We calculate the minimum  number of unit squares covering $S_i$ (the segments in each $H_i$) using the algorithm in Section \ref{CCSUS-H1-US}. Let $Q_i$ be the set of squares returned for $H_i$ in our algorithm. Let $Q^{odd}= \{Q_1\cup Q_3\cup \ldots\}$ and $Q^{even}= \{Q_2\cup Q_4 \cup \ldots\}$ be the optimum solutions for 
the segments odd and even numbered strips respectively. We have $Q^{odd} \cap Q^{even}=\emptyset$,  and we report $Q=Q_{odd}\cup Q_{even}$. Let $OPT$ be a minimum sized  set of unit squares covering $S$. Now, $|OPT| \geq \max(|Q^{odd}|,|Q^{even}|)$. Thus, $|Q|=|Q_{odd}|+|Q_{even}| \leq 2|OPT|$. Since $S_i \cap S_j =\emptyset$ and the time for computing $Q_i$ is $O(|S_i|\log |S_i|)$ (by Theorem \ref{th1}), the overall running time of the algorithm is $O(n\log n)$. Thus, we have the following theorem.

\begin{theorem} \label{th2} A $2$-factor approximation result for the \textit{CCSUS-H1} problem can be computed in $O(n\log n)$ time. 
\end{theorem}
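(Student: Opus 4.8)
The plan is to reduce the planar covering problem to a family of mutually ``decoupled'' one‑dimensional strip instances, each of which is solvable optimally by the subroutine of Section \ref{CCSUS-H1-US}, and then to glue the strip solutions back together at the cost of only a factor of $2$.

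First I would slice the plane into the unit‑height horizontal strips $H_1,H_2,\ldots,H_\ell$ and bucket every segment of $S$ into the unique strip that contains it; since a horizontal segment lies on a single horizontal line, this assignment is well defined once we fix a tie‑breaking convention for segments that happen to lie on a strip boundary. Write $S_i$ for the group of segments placed in $H_i$, so the $S_i$ are pairwise disjoint and $\sum_i|S_i|=n$. The structural fact driving everything is that a unit square has vertical extent exactly $1$, hence meets at most two consecutive strips; in particular no single square can cover a segment of $S_i$ together with a segment of $S_j$ when $|i-j|\ge 2$. Thus the odd‑indexed strips are collectively decoupled from the even‑indexed ones, and within either parity class the individual strips are decoupled from one another.

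Next I would run the \textit{CCSUS-H1-US} algorithm of Section \ref{CCSUS-H1-US} separately on each $S_i$ to obtain a square set $Q_i$; since that algorithm computes a minimum cover of a single unit‑height strip, $Q_i$ is optimal for $S_i$. Because no square is usable across two strips of the same parity, the union $Q^{odd}=\bigcup_{i\text{ odd}}Q_i$ is itself an \emph{optimum} cover of $\bigcup_{i\text{ odd}}S_i$ (any cover of that union splits into per‑strip covers, each no smaller than the corresponding $Q_i$), and likewise $Q^{even}=\bigcup_{i\text{ even}}Q_i$ for the even strips. Now let $OPT$ be a minimum cover of all of $S$. Restricting attention to the odd‑strip segments, $OPT$ is a feasible cover of $\bigcup_{i\text{ odd}}S_i$, so $|OPT|\ge|Q^{odd}|$, and symmetrically $|OPT|\ge|Q^{even}|$; hence $|OPT|\ge\max(|Q^{odd}|,|Q^{even}|)$. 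Reporting $Q=Q^{odd}\cup Q^{even}$ (a disjoint union) therefore gives $|Q|=|Q^{odd}|+|Q^{even}|\le 2\max(|Q^{odd}|,|Q^{even}|)\le 2|OPT|$. For the running time, bucketing and the internal sorts cost $O(n\log n)$ in total, and by Theorem \ref{th1} strip $i$ is handled in $O(|S_i|\log|S_i|)$ time, which sums to $O(n\log n)$.

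The only step that I expect to need more than a sentence — and hence the main obstacle — is justifying the optimality of the per‑strip subroutine and checking that the odd/even decoupling is watertight at strip boundaries. For the former, one should observe that a unit square covering an endpoint lying inside a unit‑height strip may be translated vertically, without uncovering that endpoint, until it spans the whole strip; this collapses the strip instance to the textbook problem of covering points on a line by unit intervals, for which the described greedy is optimal. For the latter, one must make sure that a square touching strips $H_i$ and $H_{i+1}$ is accounted for in exactly one of $Q^{odd}$, $Q^{even}$, so that the lower bound $|OPT|\ge\max(|Q^{odd}|,|Q^{even}|)$ does not leak through shared squares — which is exactly why the analysis keeps the two parity classes separate rather than comparing $|OPT|$ directly with $\sum_i|Q_i|$.
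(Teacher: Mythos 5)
Your proposal is correct and follows essentially the same route as the paper: partition into unit-height horizontal strips, solve each strip optimally with the Section \ref{CCSUS-H1-US} greedy, and compare the disjoint odd and even strip solutions against $OPT$ via $|OPT|\ge\max(|Q^{odd}|,|Q^{even}|)$ to get the factor $2$. The extra care you take with boundary tie-breaking and with justifying the per-strip optimality is a welcome tightening of details the paper leaves implicit, but it does not change the argument.
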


\subsection{\textit{CCSUS-HV1}  problem} \label{factor4algo} \vspace{-0.1in}
Here, we have both horizontal and vertical segments in $S$ which are of unit length. An easy way to get a factor 4 approximation algorithm for this problem is as follows. Let $S=S_H\cup S_V$, where $S_H$ and $S_V$ are the sets of horizontal and 
vertical unit segments respectively. We already have a factor 2 approximation algorithm for covering the members in $S_H$ (see Theorem \ref{th2}). The same algorithm works for $S_V$. Let $Q_H$ and $Q_V$ be the set of squares returned by our algorithm for covering $S_H$ and $S_V$ respectively. If $OPT_H$ and $OPT_V$ are the optimum solution for $S_H$ and $S_V$ respectively, and $OPT$ be the overall optimum solution for $S_H\cup S_V$, then $|OPT| \geq |OPT_H|$ and $|OPT| \geq |OPT_V|$.
Further, $|Q_H| \leq 2 |OPT_H|$ and $|Q_V| \leq 2 |OPT_V|$. Thus, $|Q_H|+|Q_V| \leq 2 |OPT_H|+2 |OPT_V|\leq 4|OPT|$.

We now propose a factor 3 approximation algorithm for this problem using sweep-line technique. During the execution of the algorithm, we maintain a set of segments 
$LB$ such  that no two of the members in $LB$ can be covered by an unit square. For each segment in $S$ we maintain a flag variable; its value is 1 or 0 depending 
on whether it is covered or not by the chosen set of squares corresponding to the members in $LB$. We also maintain a range tree $\cal T$ 
with the end-points of the members in $S$. Each element in $\cal T$ has a pointer to the corresponding element in $S$. In Algorithm \ref{algo-fact-3}, 
we describe the algorithm.

\begin{algorithm}[ht]
\begin{center}
\begin{algorithmic}[1]
\small
\STATE {\bf Input:} A set $S$ of $n$ horizontal and vertical unit segments.
\STATE {\bf Output:} A set $OUTPUT$ of unit squares which covers $S$.
\STATE $OUTPUT = \emptyset$; $LB=\emptyset$
\STATE sort the unit segments in $S$ from top to bottom according to their $r(.)$-values
\STATE (* see Section \ref{intro} for the definition of $r(s)$-values for the segments $s\in S$ *)  
\FOR {each segment $s \in S$ in order}
  \IF {$flag(s)=0$}
    \STATE insert  $s$ in $LB$; set $flag(s)=1$
     \IF {$s$ is horizontal}
         \STATE $m=3$, and define three unit squares $\{t_1,t_2,t_3\}$ as shown in Figure \ref{chsus-hv1-1}
         \STATE insert $t_1,t_2,t_3$ in $OUTPUT$ 
      \ELSIF {$s$ is vertical}
         \STATE $m=2$, and define two unit squares $\{t_1,t_2\}$ as shown in Figure \ref{chsus-hv1-2}
         \STATE insert $t_1,t_2$ in $OUTPUT$
     \ENDIF
     \FOR {$i=1,\ldots, m$}
           \STATE perform range searching with $t_i$
           \STATE for each element $\alpha$ of $\cal T$ in $t_i$ observe the corresponding element $s' \in S$
           \STATE set $flag(s')=1$; delete both the end-points of $s'$ from $\cal T$ 
     \ENDFOR
  \ENDIF
\ENDFOR
\STATE {\bf Return} $OUTPUT$
\end{algorithmic}
\end{center}
\caption{Factor $3$ Algorithm for \textit{CHSUS-HV1}.}
\label{algo-fact-3}\vspace{-0.1in}
\end{algorithm}
\normalsize

\begin{figure}[t]
\begin{center}
\begin{subfigure}[t]{1in}
\includegraphics[scale=.35]{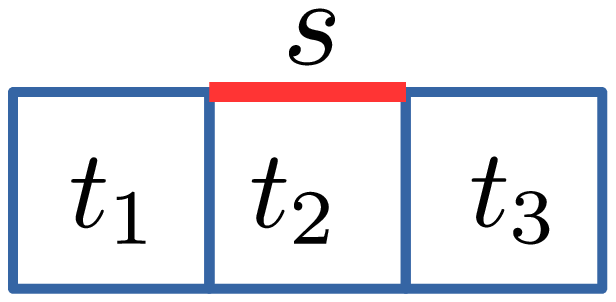}
\caption{ }
\label{chsus-hv1-1}
\end{subfigure}
\hspace{3.5cm}
\begin{subfigure}[t]{1in}
\includegraphics[scale=.35]{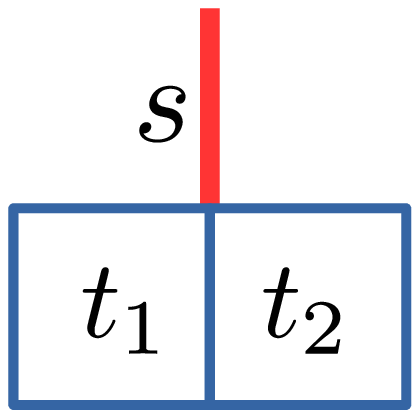}
\caption{ }
\label{chsus-hv1-2}
\end{subfigure}
\end{center}\vspace{-0.1in}
\caption{(a) Placement of $3$ unit squares $t_1, t_2$, $t_3$ for a horizontal unit segment $s$. (b) Placement of $2$ unit squares $t_1$ and $t_2$ for a vertical unit segment $s$.}
\vspace{-0.1in}
\end{figure}

\begin{theorem} Algorithm \ref{algo-fact-3} produces a $3$-factor approximation result for the \textit{CHSUS-HV1} problem, and it runs in $O(n\log n)$ time using $O(n\log n)$ space.
\end{theorem}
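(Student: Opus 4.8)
The plan is to prove three facts: (1) the set $OUTPUT$ returned by Algorithm~\ref{algo-fact-3} covers every segment of $S$; (2) $|OUTPUT|\le 3\,|OPT|$, where $OPT$ denotes a minimum-cardinality set of unit squares covering $S$; and (3) the $O(n\log n)$ time and $O(n\log n)$ space bounds. Everything rests on one geometric lemma, which I would state and prove first: \emph{when a horizontal (resp.\ vertical) unit segment $s$ is inserted into $LB$, the three squares $t_1,t_2,t_3$ of Figure~\ref{chsus-hv1-1} (resp.\ the two squares $t_1,t_2$ of Figure~\ref{chsus-hv1-2}) (a)~contain an end-point of $s$, and (b)~cover every unit segment $s'\in S$ such that some unit square covers both $s$ and $s'$ and such that $s'$ does not precede $s$ in the top-to-bottom sweep order} (equivalently, the $y$-coordinate of $r(s')$ is at most that of $r(s)$).

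For the lemma I would argue as follows. A single unit square covers both $s$ and $s'$ iff some end-point of $s'$ lies within $L_\infty$-distance $1$ of some end-point of $s$; since $l(s)$ and $r(s)$ are one unit apart, this locus is an axis-parallel rectangle $R$ of dimensions $3\times 2$ (horizontal $s$) or $2\times 3$ (vertical $s$) around $s$. The squares of the figures sit immediately below $s$ (below $r(s)$ when $s$ is vertical) and together tile exactly the part $R^-$ of $R$ lying on or below the horizontal line through $r(s)$. Hence the lemma reduces to the claim that every relevant $s'$ has an end-point in $R^-$, which I would establish by a short case analysis on the orientation of $s'$ and on which end-point of $s'$ falls inside $R$: because $s'$ also has unit length one end-point determines the other, and the sweep-order condition bounds the $y$-coordinate of $r(s')$, so each case collapses to a small set of positions, all of which place an end-point of $s'$ in $R^-$. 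This case analysis is the step I expect to be the main obstacle; everything after it is routine.

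Granting the lemma, covering (fact~1) is immediate: the for-loop examines each segment once; a segment is either already flagged---so an end-point of it lies in a square already placed in $OUTPUT$---or it is inserted into $LB$ and then covered by the squares just added, by part~(a). For the ratio (fact~2) I first show $LB$ is an independent set (Definition~\ref{def}): if $s$ is inserted before $s'$ and some unit square covered both, then by part~(b) one of the squares placed for $s$ covers $s'$, so the range search sets $flag(s')=1$ and $s'$ is never inserted into $LB$---a contradiction. Therefore every unit square covers at most one member of $LB$, so any family of unit squares covering $S\supseteq LB$ has size at least $|LB|$; in particular $|OPT|\ge|LB|$. Since the algorithm adds at most three squares to $OUTPUT$ per member of $LB$, we get $|OUTPUT|\le 3\,|LB|\le 3\,|OPT|$.

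Finally, the resource bounds (fact~3): sorting $S$ by the $y$-coordinates of $r(\cdot)$ costs $O(n\log n)$, and building the range tree $\mathcal T$ on the $2n$ end-points costs $O(n\log n)$ time and $O(n\log n)$ space. Each segment is handled once, and only the at most $n$ members of $LB$ trigger $O(1)$ orthogonal range-reporting queries on $\mathcal T$ followed by deletion of the reported points. Using fractional cascading, a reporting query costs $O(\log n+k_i)$ and a deletion $O(\log n)$; since every end-point is reported and deleted at most once, $\sum_i k_i=O(n)$ over $O(n)$ queries, giving $O(n\log n)$ in total. This yields the claimed $O(n\log n)$ time and $O(n\log n)$ space and completes the proof.
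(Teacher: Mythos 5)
Your proposal is correct and follows essentially the same route as the paper: show $LB$ is an independent set so $|LB|\le|OPT|$, charge at most $3$ squares of $OUTPUT$ to each member of $LB$, and account for the sort plus range-tree operations to get $O(n\log n)$ time and space. The only difference is that you explicitly state and verify the geometric lemma (that the $3$ squares for a horizontal segment, resp.\ $2$ for a vertical one, cover every not-yet-swept segment sharing a unit square with $s$), which the paper asserts only implicitly via Figure~\ref{chsus-hv1-1} and Figure~\ref{chsus-hv1-2} and the word ``clearly''; your case analysis of the $3\times 2$ (resp.\ $2\times 3$) locus $R$ and its lower half $R^-$ is the right way to make that rigorous.
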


\begin{proof} Let $OPT$ be an optimal set of unit squares covering the members in $S$. In each iteration, we add a segment $s$ to $LB$ only if none of its end-points 
is covered by any unit square in $OUTPUT$. Clearly,  $LB$ is a maximal independent set of segments of $S$ (see Definition \ref{def}) and hence $|LB|\leq |OPT|$. Further, for each segment added to $LB$, at most $3$ unit squares are added to $T'$. Hence, $|T'|  \leq 3|LB| \leq 3|OPT|$. Also, when the algorithm terminates the squares in $OUTPUT$ covers the segments $S$. By using range searching data structure Algorithm \ref{algo-fact-3} can run in $O(n\log n)$ time.
\end{proof}

\subsubsection{Polynomial time approximation scheme}

In this section, we propose a \ptas~for the \textit{CCSUS-HV1} problem using the \colb{shifting strategy} of Hochbaum and Maass \cite{Hochbaum1985}. We are given a set 
$S$ of $n$ horizontal and vertical unit segments. Enclose the segments inside a integer length square box $B$; partition $B$ into vertical strips of width $1$, and also  partition $B$ into horizontal strips of height $1$. 
We choose a constant $k$, and define a \colb{k-strip} which consists of at most $k$ consecutive strips. Now, we define the concept of  \colb{shifts}. We have $k$ different shifts in the vertical direction. Each vertical shift consists of some disjoint $k$-strips. In the $i$-th shift ($i=0,1,\ldots, k-1$), the first $k$-strip consists of $i$ unit vertical strips at extreme left, and then onwards each $k$-strip is formed with $k$ consecutive unit vertical strips. Similarly, 
$k$ shifts are defined in horizontal direction. Now consider {\boldmath{\colb{shift$(i,j)$}}} as the $i$-th vertical shift and $j$-th horizontal shift. This splits the box $B$ into rectangular \colb{cells} of size at most $k \times k$. The following observation is important to analyze the complexity of our algorithm.  

\begin{observation} \label{obj} An optimal solution contains squares such that one boundary of each of those squares is attached to an end-point of some segment or two
boundaries are attached to the end-point of two different segments.
\end{observation}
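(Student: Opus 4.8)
The plan is to argue that any optimal solution can be transformed, square by square, into one of the claimed canonical form without increasing its cardinality and without losing coverage. First I would fix an optimal set $OPT$ of unit squares covering $S$, and look at an arbitrary square $t \in OPT$. Let $A(t) \subseteq S$ be the set of segments for which $t$ covers at least one endpoint (so replacing $t$ by any square $t'$ with $A(t) \subseteq A(t')$ keeps the solution feasible). The goal is to slide $t$ — while keeping $A(t)$ contained in the covered set — until one side of $t$ passes through an endpoint of some segment, and then slide along that side until a second side hits another endpoint, stopping earlier only if the first constraint already pins both coordinates.

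The key steps, carried out in order: (1) Since $A(t)$ is finite, there is a well-defined leftmost relevant endpoint among the endpoints that $t$ currently covers; translate $t$ to the right until its left boundary is about to drop one of these endpoints — at that moment the left (or right) vertical side of $t$ contains an endpoint of some segment, and all of $A(t)$ is still covered. (2) With the $x$-coordinate of $t$ now fixed by that contact, translate $t$ vertically (upward, say) until its bottom (or top) horizontal side is about to lose a covered endpoint; now a horizontal side contains an endpoint too. (3) Distinguish the two outcomes: either the vertical motion in step (2) was actually possible for a positive distance, in which case we end with two boundaries through two endpoints (possibly the same segment, but generically different), or the square was already blocked in both vertical directions at the start of step (2), meaning two opposite horizontal sides each carry a covered endpoint — but two unit-separated endpoints on opposite horizontal sides of a unit square is exactly the configuration where one boundary is attached to an endpoint, matching the first alternative in the statement. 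Repeating this for every square of $OPT$ (the moves are independent since we only ever shrink, never enlarge, the pointset each square is responsible for) yields an optimal solution of the stated form.

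I expect the main obstacle to be bookkeeping around degenerate and boundary-alignment cases: when an endpoint lies exactly on a strip line, when sliding $t$ causes it to simultaneously gain and lose endpoints, or when $A(t)$ is a singleton so there is slack in both directions — here one must note that a square covering a single endpoint can be pushed until that endpoint sits on a corner, so in fact \emph{two} boundaries pass through that one endpoint, which still fits the statement. The other subtlety is making precise "attached to an endpoint": I would phrase it as the (closed) square having an endpoint on the relevant side, and check that the canonical placements used later in the \ptas\ analysis (those generated by pairs of endpoints, or by single endpoints at a corner) form a polynomial-size candidate set per $k\times k$ cell, which is the only way this observation is actually used.
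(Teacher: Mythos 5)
Your proposal is correct and follows essentially the same route as the paper's justification: slide each square (while preserving the set of endpoints it is responsible for) until one boundary meets an endpoint, then slide orthogonally until a second boundary meets another endpoint, falling back to the single-contact case when no such second endpoint exists within reach. Your extra bookkeeping for the singleton and doubly-blocked cases is a harmless refinement of the same argument.
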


{\bf Justification:} Suppose the boundary of a square $t$ in the optimum solution does not pass through any end-point (see Figure \ref{chsus-hv1-ptas-1}). We can move $t$ 
vertically/horizontally to touch to an end-point of some segment (see Figure \ref{chsus-hv1-ptas-2}). This square $t$ can further be moved in the  direction orthogonal to 
the previous movement to touch end-point of some other segment provided such an end-point exists within distance 1 from one of the boundaries of $t$ in that direction 
(see Figure \ref{chsus-hv1-ptas-3}). Figure \ref{chsus-hv1-ptas-4} shows that a unit square in an optimum solution may touch an end-point of only one segment also.

\begin{figure}[htbp]
\begin{center}
\begin{subfigure}[t]{1in}\includegraphics[scale=.3]{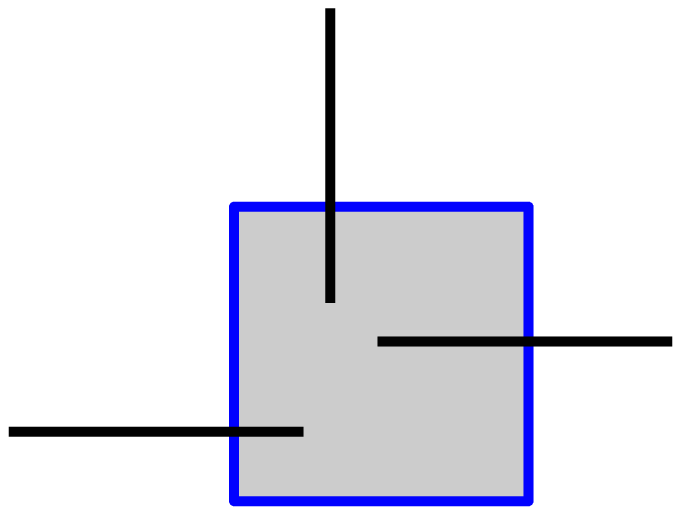}
\caption{ }
\label{chsus-hv1-ptas-1}
\end{subfigure}
\hspace{0.5cm}
\begin{subfigure}[t]{1in}
\includegraphics[scale=.3]{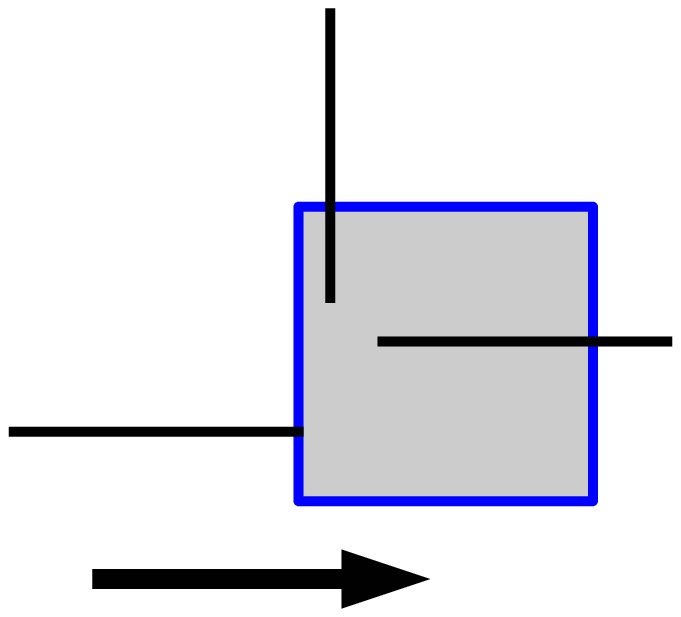}
\caption{ }
\label{chsus-hv1-ptas-2}
\end{subfigure}
\hspace{0.5cm}
\begin{subfigure}[t]{1in}\includegraphics[scale=.3]{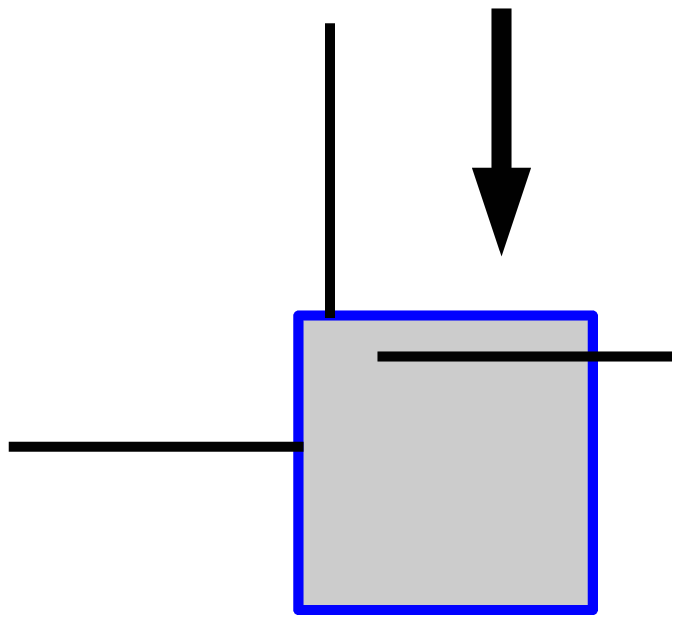}
\caption{ }
\label{chsus-hv1-ptas-3}
\end{subfigure}
\hspace{0.5cm}
\begin{subfigure}[t]{1in}\includegraphics[scale=.3]{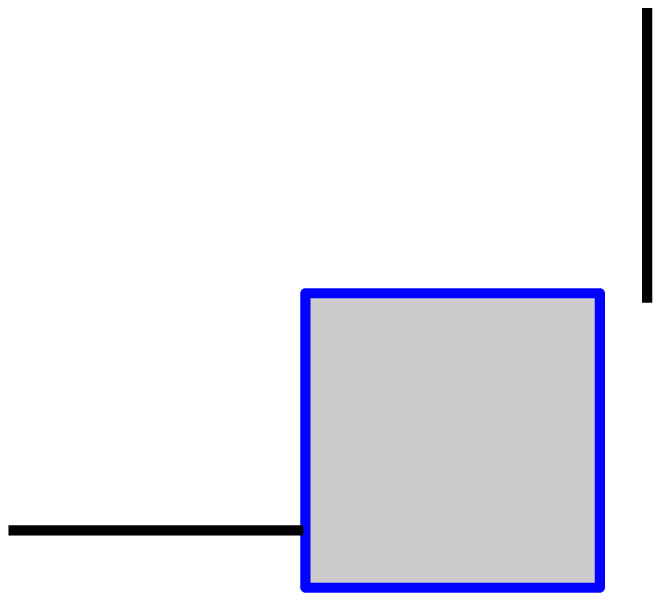}
\caption{ }
\label{chsus-hv1-ptas-4}
\end{subfigure}
\end{center}\vspace{-0.1in}
\caption{Justification of Observation \ref{obj}.}
\label{chsus-hv1-ptas-fig}
\vspace{-0.1in}
\end{figure}

\begin{lemma}\label{ptas-lemma} Finding a feasible solution for each $shift(i,j)$ require at most $O(n^{2k^2})$ time. 
\end{lemma}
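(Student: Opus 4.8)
The plan is to bound the number of candidate squares per cell and then enumerate subsets.

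First I would use Observation~\ref{obj} to argue that, within any fixed $shift(i,j)$, an optimal solution restricted to a single $k\times k$ cell uses only squares that are ``anchored'' at end-points of the input segments: either one boundary passes through an end-point, or two boundaries pass through two different end-points. A cell of size at most $k\times k$ contains at most a constant times the number of end-points that lie in (or near) it, but crucially, for a square to be useful inside the cell it must cover an end-point inside the cell. So I would count: for each of the $O(n)$ end-points, the number of distinct unit squares having a boundary through that end-point and also needed to be placed so as to possibly contain another end-point is $O(n)$ (one choice of the orthogonal alignment for each other end-point within distance~$1$), plus the $O(1)$ squares that touch only that one end-point (four positions — boundary through the point on each of the four sides, roughly). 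Hence the total number of candidate squares relevant to the whole instance is $O(n^2)$, and in particular at most $O(n^2)$ candidate squares can be relevant to any single cell.

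Next, since a single cell has size at most $k\times k$ with $k$ a constant, any feasible solution inside that cell needs at most $O(k^2)=O(1)$ squares (e.g.\ one square per unit sub-cell suffices to cover everything, or one simply notes the optimum per cell is bounded by a constant). Therefore to find an optimal (hence feasible) covering of the segments whose relevant end-point lies in a given cell, I would exhaustively try all subsets of the $O(n^2)$ candidate squares of size at most $c k^2$ for the appropriate constant $c$; the number of such subsets is $\binom{O(n^2)}{O(k^2)} = O(n^{2k^2})$ up to constant factors absorbed in the exponent, and for each subset we check feasibility (does it cover every segment assigned to this cell?) in polynomial time. Summing over the $O(1)$ cells of a shift and then over the $k^2$ shifts only multiplies by a constant, so producing a feasible solution for each $shift(i,j)$ costs $O(n^{2k^2})$ time.

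The main obstacle is pinning down the counting argument cleanly: I must be careful that ``anchored'' squares associated with a cell are genuinely $O(n^2)$ and that restricting attention to them loses nothing, i.e.\ that an optimal per-cell solution can always be taken to consist of such squares (this is exactly where Observation~\ref{obj} and the boundary-sliding argument are invoked, taking care that sliding a square keeps it within the same cell or that boundary cases at cell borders are handled by the shifting framework). The bookkeeping of how a segment is ``assigned'' to a cell — presumably via the cell containing the end-point that gets covered — also needs to be stated precisely so that the per-cell subproblems are well defined and their union is a valid global solution; but once that is fixed, the enumeration bound $O(n^{2k^2})$ follows immediately.
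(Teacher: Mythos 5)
Your overall strategy is the one the paper uses: invoke Observation~\ref{obj} to restrict attention to ``anchored'' candidate squares, of which there are $O(n^2)$ (two \emph{type-1} positions per pair of end-points within $L_\infty$ distance $1$, plus $O(1)$ \emph{type-2} positions per end-point), note that a cell of size at most $k\times k$ never needs more than $k^2$ unit squares, and enumerate all subsets of candidates of size at most $k^2$, giving $\binom{O(n^2)}{k^2}=O(n^{2k^2})$ configurations per cell.

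There is one concrete slip: you sum ``over the $O(1)$ cells of a shift,'' but the number of cells in a $shift(i,j)$ is not a constant --- the bounding box $B$ can be arbitrarily large relative to $k$, and the number of non-empty cells can be $\Theta(n)$. With your accounting (a global pool of $O(n^2)$ candidates examined independently in each cell), the total cost becomes (number of non-empty cells)${}\times O(n^{2k^2})$, i.e.\ $O(n^{2k^2+1})$, which overshoots the claimed bound. The paper avoids this by doing the count \emph{locally}: if $n_C$ is the number of segments with a portion inside cell $C$, the candidate positions for $C$ number $2n_C+2\binom{2n_C}{2}=O(n_C^2)$, the per-cell enumeration costs $O(n_C^{2\chi})$ with $\chi\le k^2$, and since each segment meets at most two cells of a fixed shift, $\sum_C n_C\le 2n$, whence $\sum_C O(n_C^{2k^2})=O(n^{2k^2})$. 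You should also make explicit, as you note yourself at the end, that only candidate squares covering an end-point lying in (or assigned to) $C$ are tried for $C$; with that localization your argument coincides with the paper's and the stated bound follows.
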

\begin{proof} First consider a single cell $C$ of a particular $shift(i,j)$ which consists of $\chi$ many $1 \times 1$ cell, where $\chi$ is at most $k^2$. Let $n_C$ be 
the size of the set of segments which has a portion inside $C$. Observe that, at most $\chi$ unit squares can cover these $n_C$ segments in $C$. Again, at most $2n_C + 2\times {2n_C\choose 2}$ many positions are available for positioning 
the unit squares in an optimum solution of $C$ (see Observation \ref{obj} and Figure \ref{chsus-hv1-ptas-fig})\footnote{See Appendix \ref{appendix1} for the exact analysis of this count.}.    
We can use at most $\chi$ unit squares among $O({n_C^2)}$ possible positions to cover all the segments in $C$. 
Since optimal solution may be of any size in $\{1,2,\ldots, \chi\}$, we may need to consider   
$O(n_C^{2\chi})$ possible configurations to get the optimal solution. Since each segment can participate in at most two cells in $shift(i,j)$, 
we have $\sum_C n_C \leq 2n$. Thus, the time required for processing all the non-empty cells in $shift(i,j)$ requires at most $O(n^{2k^2})$.
\end{proof}

In our algorithm, for each $shift(i,j)$ we calculate optimal solution in each cell and combine them to get a feasible solution. Finally, return the minimum among these $k^2$ feasible solutions. 

Let $OPT$ be an optimum set of unit squares covering $S$, and $Q$ be a feasible solution returned by our algorithm described above. Now, we prove the following theorem.

\begin{theorem} $|Q| \leq (1+\frac{1}{k})^2|OPT|$ and the running time of the above algorithm is $O(k^2n^{2k^2})$.
\end{theorem}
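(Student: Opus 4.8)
The plan is to establish the two claims separately. The running time is immediate from Lemma~\ref{ptas-lemma}: for each of the $k$ vertical shifts and $k$ horizontal shifts --- hence each of the $k^2$ choices of $shift(i,j)$ --- the algorithm computes an optimal cover inside every non-empty cell and unions them, which by Lemma~\ref{ptas-lemma} costs $O(n^{2k^2})$; taking the cheapest of the $k^2$ resulting feasible solutions then gives total time $O(k^2 n^{2k^2})$.

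For the approximation ratio I would run the shifting argument of Hochbaum and Maass once in each of the two coordinate directions. Fix the per-cell subproblem to be: cover every segment that has a portion inside the cell, using unit squares placed anywhere. Then for any fixed $shift(i,j)$ the union of the per-cell optima is automatically feasible for $S$, since each segment of $S$ lies in the box $B$ and hence has a portion in some cell. To compare with a global optimum $OPT$, for a cell $C$ let $\mathrm{restr}(OPT,C)$ be the set of squares of $OPT$ that cover an endpoint of some segment having a portion in $C$; this is a feasible solution of $C$'s subproblem, so the cell-optimum for $C$ is at most $|\mathrm{restr}(OPT,C)|$, and therefore the algorithm's output for $shift(i,j)$ has size at most $\sum_{C}|\mathrm{restr}(OPT,C)|$.

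Now I would exploit that a unit segment has $x$-extent at most $1$: if a square $t\in OPT$ covers an endpoint of a segment $s$ with a portion in a vertical $k$-strip $V$, then $V$ must meet an $x$-interval of length $3$ determined by $t$, so $t$ belongs to $\mathrm{restr}(OPT,\cdot)$ of at most two vertical $k$-strips of a given shift, and (in general position, using $k\ge 3$, which is the only regime relevant to a PTAS) $t$ is ``split'' across two such strips for at most one of the $k$ vertical shifts. Summing $\sum_{V}|\mathrm{restr}(OPT,V)|$ over the $k$ vertical shifts therefore yields at most $k\,|OPT|+|OPT|$, so for the best vertical shift $i^{\ast}$ we get $\sum_{V}|\mathrm{restr}(OPT,V)|\le(1+\tfrac1k)|OPT|$. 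Repeating the identical counting in the horizontal direction inside each vertical $k$-strip of shift $i^{\ast}$, and swapping the order of summation, some horizontal shift $j^{\ast}$ satisfies $\sum_{C}|\mathrm{restr}(OPT,C)|\le(1+\tfrac1k)^2|OPT|$ over the cells of $shift(i^{\ast},j^{\ast})$. Since $Q$ is the cheapest feasible solution over all $k^2$ shifts, $|Q|\le(1+\tfrac1k)^2|OPT|$.

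I expect the only real difficulty to be the bookkeeping of this restriction in the segment setting rather than any new idea. The delicate point is that a segment is covered by covering \emph{either} of its two endpoints, so a segment straddling a cell boundary may be covered in $OPT$ through an endpoint lying in a neighbouring cell; this is exactly why the subproblem must ask only that \emph{some} endpoint be covered (which makes the union feasible) and why $\mathrm{restr}(OPT,C)$ is defined through ``covers an endpoint of a segment with a portion in $C$'' rather than ``intersects $C$'' (which would not be feasible). The one quantitative check is that, with this definition, every square of $OPT$ is still charged to at most two strips per shift and split for at most one shift per direction; this follows from the unit length of the segments once $k\ge 3$, and small values of $k$ can be checked directly. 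Everything else --- feasibility of each cell solution, the nested averaging over shifts, and the final minimisation --- is routine.
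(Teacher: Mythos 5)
Your running-time argument and the outer skeleton (per-cell optima via Lemma~\ref{ptas-lemma}, union over cells for feasibility, averaging over the $k^2$ shifts, minimum $\le$ average) coincide with the paper's. The approximation-ratio part, however, takes a different charging scheme and contains a quantitative error. The paper charges a square $t\in OPT$ only to the cells that $t$ itself intersects: it defines $OPT_{ij}$ as the squares of $OPT$ meeting a cell boundary of $shift(i,j)$, invokes $|Q_{ij}|\le |OPT|+|OPT_{ij}|$ (Equation 2.3 of \cite{Hochbaum1985}), and uses the fact that a \emph{unit square} crosses at most one horizontal and one vertical integer line---hence is a boundary-crosser in at most $k$ of the $k^2$ shifts per direction---to get $\sum_{i,j}|OPT_{ij}|\le 2k\,|OPT|$ and the ratio $(1+\frac1k)^2$. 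You instead charge $t$ to every strip (and cell) containing a \emph{portion of a segment} one of whose endpoints $t$ covers. That choice is exactly right for feasibility, and you correctly identify why ``squares of $OPT$ intersecting $C$'' would not be feasible for $C$'s subproblem when segments straddle cell boundaries; but it enlarges the charging region of $t$ from a unit $x$-interval to one of length $3$.

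This is where your argument breaks: the claim that $t$ is ``split across two strips for at most one of the $k$ vertical shifts'' is false. An $x$-interval of length $3$ generically contains three consecutive integers; the strip boundaries sit at fixed integer abscissae, and for $k\ge 3$ three consecutive integers lie in three distinct residue classes modulo $k$, so $t$ is split in up to three of the $k$ vertical shifts. General position of the input does not help, because the interval always has length exactly $3$. Your accounting therefore gives $\sum_{i}\sum_{V}|\mathrm{restr}(OPT,V)|\le (k+3)|OPT|$ per direction, hence a final bound of $(1+\frac3k)^2|OPT|$ rather than the stated $(1+\frac1k)^2|OPT|$. This still yields a PTAS after reparametrizing $k$, but it does not prove the theorem as written. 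To recover the constant $1$ you would have to charge each optimal square only to the cells it intersects, which in turn requires a feasibility argument for that smaller restriction (e.g., assigning each segment to the cell containing the endpoint that a fixed optimal solution actually covers)---precisely the step the paper glosses over by citing Hochbaum--Maass directly.
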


\begin{proof} Let $Q_{ij}$ be the solution of our algorithm for $shift(i,j)$. Also, assume that $OPT_{ij}$ be the subset of squares in $OPT$ such that each of them  intersects the boundary of some cell in $shift(i,j)$. It can be shown that, $|\textsc{Q}_{ij}| \leq |OPT| + |OPT_{ij}|$ (Equation 2.3 of \cite{Hochbaum1985}). 
Now considering solutions for all $shift(i,j)$ for $1\leq i,j \leq k$, we have  

\centerline{$\displaystyle{\sum_{i=1}^k \sum_{j=1}^k |Q_{ij}|  \leq k^2 |OPT| + \sum_{i=1}^k \sum_{j=1}^k |OPT_{ij}|}.$}

Each horizontal (resp. vertical) line may be considered at most $k$ times during the $k$ vertical (resp. horizontal) shifts. 
Thus, each square intersecting a horizontal (resp. vertical) line may be counted at most $k$ times.  Thus we have, 

\centerline{$\displaystyle{\sum_{i=1}^k \sum_{j=1}^k |OPT_{ij}|  \leq k |OPT| + k |OPT|}.$ (see \cite{Goutam} for a similar analysis)}

Hence, $\displaystyle{\sum_{i=1}^k \sum_{j=1}^k |Q_{ij}|  \leq (k^2 +2k) |OPT|} \leq (k+1)^2|OPT|$, and 
finally, we have

$$\displaystyle{|Q| = \min_{i,j =1}^k\{|Q_{ij}|\} \leq \frac{\sum_{i=1}^k \sum_{j=1}^k |Q_{ij}|}{k^2}  \leq \left(1+\frac{1}{k}\right)^2 |OPT|}.$$

Using Lemma \ref{ptas-lemma}, we conclude that the running time of our algorithm is $O(k^2n^{2k^2})$.
\end{proof}

\subsection{\textit{CCSUS-ARB}  problem}

Mimicking the factor 2 approximation algorithm for the vertex cover problem of a graph, we can have a factor $8$ approximation algorithm for \textit{CCSUS-ARB} problem as follows. Next, we improve the approximation factor to $6$. 

\begin{observation}\label{greedy-obser} Let $s_1$ and $s_2$ be two segments in $S$. If none of the squares $t(l(s_1),2)$ and $t(r(s_1),2)$ covers $s_2$, then $s_1$ and $s_2$ are independent\footnote{Note that, $t(a,b)$ is a $b\times b$ square with center at $a$.}.
\end{observation}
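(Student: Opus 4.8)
The plan is to argue by contradiction: suppose some unit square $t$ covers both $s_1$ and $s_2$, yet neither $t(l(s_1),2)$ nor $t(r(s_1),2)$ covers $s_2$. Since $t$ covers $s_1$, it contains at least one end-point $p$ of $s_1$, so $p \in \{l(s_1), r(s_1)\}$. Likewise, $t$ covers $s_2$, so it contains some end-point $q$ of $s_2$. The key geometric fact is that $t$ is an axis-parallel unit square containing both $p$ and $q$, hence $p$ and $q$ lie within $L_\infty$-distance $1$ of each other; equivalently, $q$ lies in the closed axis-parallel square of side length $2$ centered at $p$, which is exactly $t(p,2)$.

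First I would make this distance claim precise: if an axis-parallel unit square contains two points $u$ and $v$, then $|u_x - v_x| \le 1$ and $|u_y - v_y| \le 1$, so $v \in t(u,2)$ and symmetrically $u \in t(v,2)$. This is immediate from the fact that the projection of a unit square onto either axis is an interval of length $1$. Applying this with $u = p$ and $v = q$ gives $q \in t(p,2)$. Since $p$ is either $l(s_1)$ or $r(s_1)$, we conclude that $t(l(s_1),2)$ or $t(r(s_1),2)$ contains the end-point $q$ of $s_2$, i.e., one of these two size-$2$ squares covers $s_2$ — contradicting the hypothesis. Hence no unit square covers both $s_1$ and $s_2$, which by Definition~\ref{def} means $s_1$ and $s_2$ are independent.

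I do not expect a serious obstacle here; the only mild care needed is bookkeeping about which end-point of $s_1$ the square $t$ actually contains (it could contain both, but one suffices) and the observation that "covers" means "contains an end-point," so it is enough to trap a single end-point of $s_2$ inside one of the two candidate squares. The whole argument reduces to the one-line $L_\infty$-diameter bound for a unit square, after which the contrapositive reading of the hypothesis closes the proof.
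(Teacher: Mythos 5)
Your proof is correct, and it is exactly the argument the paper intends (the paper states this as an Observation without an explicit proof, relying on the same $L_\infty$-diameter fact: two points inside an axis-parallel unit square differ by at most $1$ in each coordinate, so an end-point of $s_2$ caught by such a square together with an end-point $p$ of $s_1$ must lie in $t(p,2)$). No gaps.
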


As in Algorithm \ref{algo-fact-3}, here also we start with an empty set $OUTPUT$ and $LB$, and each segment in $S$ is attached with a flag bit. We maintain a range tree 
$\cal T$ with the end-points in $S$. Each time, an arbitrary segment $s \in S$ with $flag(s)=0$ is chosen, and inserted in $LB$. Its flag bit is set to 1. Insert 
four unit squares $\{t_1,t_2,t_3,t_4\}$ which fully cover the square $t(l(s),2)$ and four unit squares $\{t_1',t_2',t_3',t_4'\}$ which fully cover the square $t(r(s),2)$ in $OUTPUT$.
Remove all the segments in $S$ that are covered by $\{t_1, t_2,t_3,t_4,t_1',t_2',t_3',t_4'\}$ by performing range searching in $\cal T$ as stated in Algorithm 
\ref{algo-fact-3}.  The end-points of the deleted segments are also deleted from $\cal T$. This process is repeated until all the members in $S$ are flagged. Finally, return the set $OUTPUT$.

\begin{theorem} \label{th4} The above algorithm for \textit{CCSUS-ARB} problem runs in $O(n\log n)$ time, and produces a solution which is factor $8$ approximation of the optimal solution.
\end{theorem}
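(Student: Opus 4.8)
The plan is to separate three claims: feasibility of the set $OUTPUT$, the approximation ratio via a maximal-independent-set lower bound, and the running time via the range-tree bookkeeping.

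First I would check feasibility. The loop terminates only when every segment of $S$ has flag $1$, and a segment's flag is set to $1$ only in one of two ways: either it is the chosen segment $s$ inserted into $LB$ (and then both endpoints $l(s)$ and $r(s)$ lie inside $t(l(s),2)$ and $t(r(s),2)$ respectively, hence inside the eight unit squares that tile these two $2\times 2$ squares and are added to $OUTPUT$ in that iteration), or it is reported by one of the range queries performed with those eight squares (hence covered by one of them). Either way it is covered by a square in $OUTPUT$, so $OUTPUT$ covers all of $S$.

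Next, the approximation bound, which is the heart of the argument. I would prove that $LB$ is a maximal independent set in the sense of Definition~\ref{def}. Maximality is immediate: at termination no segment has flag $0$, and a segment gets flag $1$ only after being covered by the batch of eight squares of some chosen segment, so $LB$ together with any leftover segment cannot be independent. For independence I would use Observation~\ref{greedy-obser} in its contrapositive form: if two segments $s_1,s_2$ are \emph{not} independent, then at least one of $t(l(s_1),2)$, $t(r(s_1),2)$ covers $s_2$. Hence, the moment a segment $s$ is chosen and put into $LB$, every segment that conflicts with $s$ lies in $t(l(s),2)\cup t(r(s),2)$, is therefore covered by the eight unit squares just inserted, and gets flagged; consequently any segment chosen in a later iteration still has flag $0$ and so is independent of all previously chosen ones, making $LB$ independent. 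Since no unit square covers two members of an independent set, every feasible solution uses at least $|LB|$ squares, so $|OPT|\ge |LB|$. Exactly eight unit squares are added to $OUTPUT$ per element of $LB$, so $|OUTPUT| = 8|LB| \le 8|OPT|$, which is the claimed factor.

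Finally, the running time. Building the range tree $\cal T$ on the $2n$ endpoints costs $O(n\log n)$. Each iteration handles one segment with flag $0$ and sets its flag to $1$, so there are at most $n$ iterations; each performs a constant number (eight) of orthogonal range queries, each costing $O(\log n)$ plus time linear in the number of reported endpoints, and every endpoint is reported and deleted from $\cal T$ at most once over the whole run. Summing, the total query-plus-deletion cost is $O(n\log n)$, giving an $O(n\log n)$ overall time bound with $O(n\log n)$ space. The step I expect to be the crux is the independence argument — pinning down that the eight unit squares added for a chosen $s$ genuinely cover all of $t(l(s),2)\cup t(r(s),2)$ and hence, through the contrapositive of Observation~\ref{greedy-obser}, flag every segment that could ever conflict with $s$; the feasibility and timing parts are routine and patterned closely on Algorithm~\ref{algo-fact-3}.
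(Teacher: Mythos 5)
Your proposal is correct and follows essentially the same route as the paper's (much terser) proof: $LB$ is a maximal independent set by the contrapositive of Observation~\ref{greedy-obser}, each member of $LB$ contributes exactly eight unit squares to $OUTPUT$, so $|OUTPUT|=8|LB|\le 8|OPT|$, and the running time follows from the range-tree bookkeeping as in Algorithm~\ref{algo-fact-3}. Your write-up simply makes explicit the feasibility and independence details that the paper leaves implicit.
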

\begin{proof}
The approximation factor follows from the fact that $LB$ is a maximal independent set (see Observation \ref{greedy-obser}), and for each member in $LB$ we put 
8 squares in $OUTPUT$. The time complexity follows from that of Algorithm \ref{algo-fact-3}. 
\end{proof}

We now improve the approximation factor to $6$ using a sweep-line technique introduced in Biniaz et al. \cite{Biniaz2016}. We sort the segments in $S$ with respect to 
their left end-points\footnote{The left end-point of a vertical segment is its top end-point.}, and process the elements in $S$ in order. 
When an element $s\in S$  is processed, if $flas(s)=0$ then we put six squares, two of them covering the $1 \times 2$ rectangle 
$t_1 = right$-$half(t(l(s),2))$ and four of them covering the $2 \times 2$ square $t_2 =t(r(s),2)$. We also identify the segments in $S$ that are covered by 
$t_1$ and $t_2$, and their flag-bit is set to 1. Thus, we have the following result:
  
\begin{theorem} The above algorithm for the \textit{CCSUS-ARB} problem produces a factor $6$ approximation result in  
$O(n\log n)$ time.
\end{theorem}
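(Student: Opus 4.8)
The plan is to mirror the analysis of Theorem~\ref{th4} (the factor~8 algorithm) and of Algorithm~\ref{algo-fact-3}: show that the set $LB$ maintained by the algorithm is a maximal independent set of $S$, deduce $|LB|\le|OPT|$, and then observe that exactly six unit squares are charged to each member of $LB$ while every segment of $S$ is covered. The running-time bound is inherited verbatim from Algorithm~\ref{algo-fact-3}: sorting the segments by the $x$-coordinate of their left (top) endpoint costs $O(n\log n)$, and the six range-search queries per processed segment on the range tree $\cal T$, together with the endpoint deletions, amortize to $O(n\log n)$ overall since each endpoint is deleted at most once.

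First I would record the elementary geometric fact behind Observation~\ref{greedy-obser}: a unit square $u$ containing a point $p$ is entirely contained in $t(p,2)$, because $u$ has unit side. Hence if a unit square $u$ covers a segment $s$, then $u\subseteq t(l(s),2)$ or $u\subseteq t(r(s),2)$ according to which endpoint of $s$ lies in $u$.

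The heart of the argument — and the only place where more than Theorem~\ref{th4} is used — is to show that when a segment $s$ is processed and inserted into $LB$, every segment $s'$ with $flag(s')=0$ at that moment becomes independent of $s$ after the six squares covering $t_1=right\text{-}half(t(l(s),2))$ and $t_2=t(r(s),2)$ are placed; equivalently, $t_1\cup t_2$ covers $s'$ whenever some unit square covers both $s$ and $s'$. Here the sweep order is essential: since segments are processed in non-decreasing order of the $x$-coordinate of their left (top) endpoint and $s'$ has not yet been processed, both endpoints of $s'$ have $x$-coordinate at least $l(s)_x$. Now suppose a unit square $u$ covers both $s$ and $s'$, and let $p'\in u$ be an endpoint of $s'$. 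If $u$ contains $r(s)$, then $u\subseteq t(r(s),2)=t_2$, so $t_2$ covers $s'$. If $u$ contains $l(s)$, then $u\subseteq t(l(s),2)$, and since $p'_x\ge l(s)_x$ we get $p'\in u\cap\{x\ge l(s)_x\}\subseteq t(l(s),2)\cap\{x\ge l(s)_x\}=right\text{-}half(t(l(s),2))=t_1$, so $t_1$ covers $s'$. In either case $s'$ would have been flagged while $s$ was processed, a contradiction. Consequently any two members of $LB$ are independent (the earlier-processed one still had flag $0$ when the later one was processed, by monotonicity of the flags), so $LB$ is an independent set, and it is maximal because every unflagged segment is eventually processed and inserted. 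Since each unit square of $OPT$ covers at most one member of the independent set $LB$ and $OPT$ covers all of $S$, we get $|LB|\le|OPT|$; as the algorithm charges exactly six squares per member of $LB$ and $t_1$ already covers $s$ itself (because $l(s)$ lies on the left boundary of $right\text{-}half(t(l(s),2))$), the output is a feasible cover of size $6|LB|\le 6|OPT|$.

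I expect the main obstacle to be exactly the step above — verifying that replacing the full square $t(l(s),2)$ by only its right half still ``kills'' every segment a unit square could share with $s$. This is false without the sweep order (a unit square covering $s$ through $l(s)$ may protrude to the left of $l(s)$), so one must be careful that tie-breaking in the sort and the treatment of vertical segments (whose two endpoints share an $x$-coordinate) do not spoil the inequality $p'_x\ge l(s)_x$. Once that is pinned down, the counting ($6|LB|$) and the $O(n\log n)$ running-time analysis are routine copies of the earlier proofs.
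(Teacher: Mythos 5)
Your proposal is correct and follows exactly the argument the paper intends (the paper omits the proof, merely citing the sweep-line idea of Biniaz et al.): the sweep order guarantees every unflagged segment has both endpoints with $x$-coordinate at least that of $l(s)$, so the right half of $t(l(s),2)$ together with $t(r(s),2)$ suffices, making $LB$ a maximal independent set with six squares charged per member. The only blemish is the parenthetical in your independence claim, which should read that the \emph{later}-processed member still had flag $0$ when it was inserted (hence was not covered by the earlier member's six squares); the earlier member's flag is of course already $1$.
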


\begin{remark1} The above algorithm gives a factor $3$ approximation result for the continuous covering horizontal segments of arbitrary length where the segments are inside a horizontal strip of unit height.
\end{remark1}

\section{Discrete covering: \textit{DCSUS-ARB}  problem}\label{discrete}

In this section, we give a 16 factor approximation algorithm for \textit{DCSUS-ARB} problem. Let $S$ be a set of $n$ arbitrary segments and $T$ be a set of $m$ unit 
squares. The algorithm runs in a series of \colb{steps}. In ($i+1$)-th step, we use linear programming to partition each subset of segments obtained in the $i$-th step 
into two disjoint subsets, and finally we obtain some subsets of $S$ such that for each subset the objective is to cover either left or right end-points of all the 
segments with the portions of unit squares which are above/below a horizontal line. 
To prove the approximation factor, we consider the following problem.

\probname{Covering points by unit width rectangles abutting $x$-axis (\textit{Restricted-Point-Cover}):} Given a set $P$ of points in $I\!\!R^2$ and a set $\cal R$ of unit width rectangles such that bottom boundary of each member in $\cal R$ coincides with the $x$-axis, find a subset of $\cal R$ of minimum cardinality to cover all the points in $P$.

\begin{lemma}\label{bansal-lemma} If $\mathtt{Z}_{RPC}$ be the standard {\it ILP} formulation of the \textit{Restricted-Point-Cover} problem, $OPT_{RPC}^I$ and $OPT_{RPC}^F$ are the optimum solutions of $\mathtt{Z}_{RPC}$ and its {\it LP}-relaxation respectively, then  
$OPT_{RPC}^I \leq 2 OPT_{RPC}^F$. 
\end{lemma}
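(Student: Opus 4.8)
The plan is to prove this by a primal--dual (equivalently, local-ratio) argument, after first exposing a one-dimensional, interval-like structure hidden in the geometry. Write $\mathtt{Z}_{RPC}$ as $\min \sum_{R\in\mathcal R} x_R$ subject to $\sum_{R\ni p} x_R \ge 1$ for every $p\in P$ and $x_R\in\{0,1\}$; its \textit{LP}-relaxation replaces integrality by $x_R\ge 0$, with optimum $OPT_{RPC}^F$, and its \textit{LP}-dual is $\max \sum_{p\in P} y_p$ subject to $\sum_{p\in R} y_p \le 1$ for every $R\in\mathcal R$ and $y\ge 0$. By weak duality any feasible dual $y$ satisfies $\sum_p y_p \le OPT_{RPC}^F$, so it suffices to produce an integral cover $Z\subseteq\mathcal R$ together with a feasible dual $y$ such that $|Z| \le 2\sum_p y_p$; then $OPT_{RPC}^I \le |Z| \le 2\sum_p y_p \le 2\,OPT_{RPC}^F$.

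For the geometric structure: after an infinitesimal horizontal perturbation we may assume no rectangle has an integer left-edge $x$-coordinate, so each $R=[a_R,a_R+1]\times[0,h_R]$ straddles exactly one vertical grid line $x=k_R$, $k_R\in\mathbb Z$. Split $R$ into its \emph{left piece} $[a_R,k_R]\times[0,h_R]$ and \emph{right piece} $[k_R,a_R+1]\times[0,h_R]$. A point $p$ with $p_x\in(k-1,k)$ is covered by $R$ iff either $k_R=k$ and $p$ lies in the left piece of $R$, or $k_R=k-1$ and $p$ lies in the right piece of $R$; call these the two \emph{types} of cover of $p$. The key observation is that, within a fixed sub-column $(k-1,k)$, all left pieces of rectangles with $k_R=k$ share the vertical edge $x=k$ and all right pieces of rectangles with $k_R=k-1$ share the vertical edge $x=k-1$; each such single-type family, once dominated pieces are discarded, can be ordered so that the point--piece incidence matrix is an interval matrix (has the consecutive-ones property), i.e.\ restricted to one type, covering a sub-column is an interval-cover problem, hence totally unimodular.

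Now run a primal--dual sweep: process the points of $P$ in an order compatible with this interval structure (sub-column by sub-column, and within a sub-column by $x$-coordinate), starting from $y\equiv 0$, $Z=\emptyset$. When an as-yet-uncovered point $p$ is reached, raise $y_p$ until some rectangle $R\ni p$ becomes tight (i.e.\ $\sum_{q\in R} y_q = 1$) and add that $R$ to $Z$, breaking ties so that the chosen rectangle reaches ``as far as possible'' in its type. Dual feasibility is preserved throughout and $Z$ covers $P$ at the end. The crucial invariant, to be proved from the interval property of each single type together with the sweep order and tie-breaking, is that every point $p\in P$ lies in at most one selected rectangle of each type, hence in at most two rectangles of $Z$. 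Granting this,
$$|Z| = \sum_{R\in Z} 1 = \sum_{R\in Z}\ \sum_{p\in R} y_p = \sum_{p\in P} y_p\,\big|\{R\in Z : p\in R\}\big| \le 2\sum_{p\in P} y_p \le 2\,OPT_{RPC}^F,$$
and since $OPT_{RPC}^I\le|Z|$ the lemma follows.

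I expect the main obstacle to be the ``degree $\le 2$'' invariant in the sweep step. The difficulty is genuine: a naive decomposition that handles the ``cover by left pieces'' and ``cover by right pieces'' problems independently, rounding each of these (totally unimodular, hence integral) subproblems after scaling the fractional optimum by $2$, only yields a factor $4$, because a single rectangle may have to discharge covering obligations on both of its sides. The argument must exploit that one selected rectangle simultaneously settles both sides, which is exactly what the interval structure plus a correct processing order buys; pinning down this phenomenon (whether in the primal--dual phrasing above, in an equivalent local-ratio analysis, or in a direct rounding that partitions $P$ by which type the fractional optimum favours and charges shared rectangles once) is the heart of the proof.
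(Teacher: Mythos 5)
Your overall framework is the right one --- weak LP duality plus an integral cover $Z$ and a feasible dual $y$ with $|Z|\le 2\sum_p y_p$ --- but the proposal stops exactly where the proof has to happen. You say yourself that the ``degree $\le 2$'' invariant of the sweep is the heart of the matter and leave it unproved, and it is not clear it even holds for the order you propose: a point $p$ with $p_x\in(k-1,k)$ can lie in rectangles that get selected on behalf of points in three different sub-columns (a rectangle straddling $x=k$ may be chosen for a point of $(k-1,k)$ or of $(k,k+1)$; one straddling $x=k-1$ may be chosen for a point of $(k-2,k-1)$ or of $(k-1,k)$), so the interval/consecutive-ones structure you establish \emph{within} one sub-column and one type does not control the total multiplicity at $p$. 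The sub-column decomposition is correct as far as it goes (after discarding dominated pieces each single-type family is indeed an interval matrix), but it is a detour that never uses the one hypothesis that makes the lemma true, namely that all rectangles share the $x$-axis as bottom edge.

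The missing idea is to process points top-down in $y$ rather than left-to-right in $x$. Let $p$ be an uncovered point of maximum $y$-coordinate and $\mathcal{R}_p$ the set of rectangles containing it; let $r_\ell,r_r\in\mathcal{R}_p$ have, respectively, the leftmost left boundary and the rightmost right boundary. Any $q$ covered by some $r\in\mathcal{R}_p$ has $q_y\le p_y\le \mathrm{height}(r_\ell),\mathrm{height}(r_r)$; if $q_x\le p_x$ then $\mathrm{left}(r_\ell)\le\mathrm{left}(r)\le q_x\le p_x\le\mathrm{right}(r_\ell)$, so $q\in r_\ell$, and symmetrically $q_x\ge p_x$ gives $q\in r_r$. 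Hence adding just $r_\ell$ and $r_r$ covers everything any rectangle through $p$ covers; delete $\mathcal{R}_p$ and the points it covers and recurse on the two remaining sides. The chosen points form a set $LB$ no rectangle contains twice (every rectangle through a chosen point is deleted along with all points it covers), so $y=\mathbf{1}_{LB}$ is a feasible dual and $OPT_{RPC}^F\ge|LB|$, while the output is a feasible cover of size at most $2|LB|$. This replaces your unproved sweep invariant with a two-line geometric fact (which, incidentally, does not even need the unit-width assumption), and it is exactly the argument the paper gives in its appendix.
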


In Section 3.2 of \cite{Bansal2010}, Bansal and Pruhs showed that $OPT_{RPC}^I \leq \alpha OPT_{RPC}^F$ for some positive constant $\alpha$ for a more generic version of this problem. We show that (in Appendix \ref{appendix2}) in our simplified case Lemma \ref{bansal-lemma} follows.

Let $\mathtt{Z}_\nu$ be an {\it ILP}. Denote \colbm{$\overline{\mathtt{Z}}_\nu$}, to be the {\it LP}-relaxation of 
$\mathtt{Z}_\nu$. Define \colbm{$OPT^I_\nu$} and \colbm{$OPT^F_\nu$} as the optimal solution of $\mathtt{Z}_\nu$ and $\overline{\mathtt{Z}}_\nu$ respectively.  
We first describe the different steps of the algorithm and finally establish the approximation factor. 

\noindent {\bf Step 1:} Let $T_1 \in T$ (resp. $T_2 \in T$) be the set of all squares which cover the left (resp. right) end-points of the segments in $S$. Now for each square $t_i \in T_1$, select a binary variable $x_i$, and for each square $t_j \in T_2$, select a binary variable $y_j$. Now create an $ILP$, $\mathtt{Z}_0$ as follows.

\vspace{-5mm}
\setcounter{equation}{0}
\begin{align*}
%\hspace{-2.0cm} 
\mathtt{Z}_0: \text{ } \displaystyle{\min \sum_{i\mid t_i\in T_1} x_i + \sum_{j \mid t_j\in T_2} y_j} \hspace{3.3cm}
\end{align*} 
\vspace{-.5cm}
\begin{align*}
\text{s.t.} &\sum_{i \mid l(s_k)\in t_i\in T_1} x_i + \sum_{j\mid r(s_k)\in t_j \in T_2} y_j \geq 1  ~\forall~ k\mid s_k \in S; ~~x_i, y_j \in \{0,1\} 
~\forall~ i\mid t_i\in T_1 ~\&~ j\mid t_j\in T_2 \end{align*}
\vspace{-.3cm}

After solving this {\it ILP}, the value of $x_i$ = 1 or 0 depending on whether the square $t_i$ is in an optimal solution or not. Similarly, $y_j$ = 1 or 0 if the square $t_j$ is in an optimal solution or not.  

We solve the corresponding $LP$, $\overline{\mathtt{Z}}_0$. Now, create two partitions  $S_1 \subseteq S$ and $S_2 \subseteq S $ as follows. $S_1$ consists of those segments $s_k$ such that  $\sum_{i|l(s_k)\in t_i \in T_1} x_i \geq 1/2$, and $S_2$ consists of those segments $s_\ell$ for which $\sum_{j|l(s_\ell)\in t_j\in T_2} y_i \geq 1/2$. Now consider two $ILP$'s $\mathtt{Z}_1$ and $\mathtt{Z}_2$ as follows.

\vspace{-2mm}
\noindent \begin{minipage}{.5\textwidth}
\begin{align*}
%\hspace{-2.0cm} 
\mathtt{Z}_1: \text{ } \displaystyle{\min \sum_{i\mid t_i\in T_1} x_i} \hspace{3.3cm}
\end{align*} 
\vspace{-.5cm}
\begin{align*}
 ~~~\text{s.t.} &\sum_{i \mid l(s_k)\in t_i\in T_1} x_i \geq 1,  ~~\forall~ k\mid s_k \in S_1 
\end{align*}
\vspace{-.5cm}
\begin{align*}
&x_i\in \{0,1\} ~~~ \forall i \mid t_i\in T_1
\end{align*}
\end{minipage}
\vline
\begin{minipage}{.5\textwidth}
\begin{align*}
%\hspace{-2.0cm} 
\mathtt{Z}_2: \text{ } \displaystyle{\min \sum_{j \mid t_j\in T_2} y_j} \hspace{3.3cm}
\end{align*} 
\vspace{-.5cm}
\begin{align*}
 ~~~\text{s.t. } &\sum_{j\mid r(s_k)\in t_j\in T_2} y_j \geq 1,  ~~\forall~ k\mid s_k \in S_2 
\end{align*}
\vspace{-.5cm}
\begin{align*}
&y_j \in \{0,1\} ~~~~\forall ~ j \mid t_j\in T_2
\end{align*}
\end{minipage}

%Let $OPT^I$ and $OPT^F$ be the optimal solution of the $\mathtt{Z}$ and $\overline{\mathtt{Z}}$ respectively. Also, let $OPT_1^F$ and $OPT_2^F$ be the optimal fractional solution for subproblems $\overline{\mathtt{Z}_1}$ and $\overline{\mathtt{Z}_2}$ respectively. 
Then by an analysis identical to Gaur et al. \cite{Gaur_2002}, we conclude that $OPT_1^F + OPT_2^F \leq 2 OPT_0^F \leq 2 OPT_0^I.$

\noindent {\bf Step 2:} Observe that, both the $ILP$'s, $\mathtt{Z}_1$ and $\mathtt{Z}_2$ are the problems of covering points by unit squares. Consider the  covering problem $\mathtt{Z}_1$ \footnote{More precisely, $\mathtt{Z}_1$ (resp. $\mathtt{Z}_2$) corresponds to the covering problem of left (resp. right) end points of the segments in $S_1$ (resp. $S_2$) by unit squares.}. Divide the plane into unit strips by drawing horizontal lines. Observe that, no unit square in $T_1$ can intersect more than one line.  Partition $T_1$ into two sets $T_{11}$ and $T_{12}$, where $T_{11}$ consists of all squares which intersect even indexed lines and $T_{12}$ consists of all squares which intersect odd indexed lines. Define binary variables $x_i$ for $t_i \in T_{11}$ and $y_j$ for $t_j \in T_{12}$. Then, $\mathtt{Z}_1$ is equivalent to the following $ILP$.

\vspace{-5mm}
\begin{align*}
%\hspace{-2.0cm} 
\mathtt{Z}_1: \text{ } \displaystyle{\min \sum_{i\mid t_{i}\in T_{11}} x_{i} + \sum_{j \mid t_{j}\in T_{12}} y_{j}} \hspace{3.3cm}
\end{align*} 
\vspace{-.5cm}
\begin{align*}
\text{s.t.} \hspace{-0.2in} &\sum_{i \mid l(s_k)\in t_{i}\in T_{11}} \hspace{-0.08in} x_{i} + \hspace{-0.2in} \sum_{j\mid l(s_k)\in t_{j}\in T_{12}} \hspace{-0.08in} y_{j} \geq 1  ~~\forall~ k\mid s_k \in S_1; ~~  
x_{i}, y_{j} \in \{0,1\} ~~\forall ~ i \mid t_{i}\in T_{11} ~\& ~ j \mid t_{j}\in T_{12}
\end{align*}

\vspace{-0.1in}
We solve $\overline{\mathtt{Z}}_1$. Now, create two groups $S_{11}$ and $S_{12}$. $S_{11}$ consists of those segments $s_k$ such that  $\sum_{i|l(s_k)\in t_{i} \in T_{11}} x_{i} \geq 1/2$, and $S_{12}$ consists of those segments $s_\ell$ for which $\sum_{j|l(s_\ell)\in t_{j}\in T_{12}} y_{i} \geq 1/2$.
Again consider two $ILP$'s, $\mathtt{Z}_{11}$ and $\mathtt{Z}_{12}$ as follows.

\begin{minipage}{.5\textwidth}
\begin{align*}
%\hspace{-2.0cm} 
\mathtt{Z}_{11}: \text{ } \displaystyle{\min \sum_{i\mid t_{i}\in T_{11}} x_{i}} \hspace{3.3cm}
\end{align*} 
\vspace{-.5cm}
\begin{align*}
~~~\text{s.t. } &\sum_{i \mid l(s_k)\in t_{i}\in T_{11}} x_{i} \geq 1,  ~~ \forall~ k\mid s_k \in S_{11} 
\end{align*}
\vspace{-.5cm}
\begin{align*}
&~~~x_{i} \in \{0,1\} ~~~~\forall ~ i \mid t_{i}\in T_{11}  
\end{align*}
\end{minipage}
\vline
\begin{minipage}{.5\textwidth}
\begin{align*}
%\hspace{-2.0cm} 
\mathtt{Z}_{12}: \text{ } \displaystyle{\min  \sum_{j \mid t_{j}\in T_{12}} y_{j}} \hspace{3.3cm}
\end{align*} 
\vspace{-.5cm}
\begin{align*}
~~~\text{s.t. } & \sum_{j\mid l(s_k)\in t_{j}\in T_{12}} y_{j} \geq 1,  ~~ \forall~ k\mid s_k \in S_{12}
\end{align*}
\vspace{-.5cm}
\begin{align*}
&~~~ y_{j} \in \{0,1\} ~~~~\forall ~  j \mid t_{j}\in T_{12}
\end{align*}

\end{minipage}
\vspace{.05cm}

Then by an analysis identical to Gaur et al. \cite{Gaur_2002}, we conclude that, $OPT_{11}^F + OPT_{12}^F \leq 2 OPT_1^F$. A similar analysis for $\mathtt{Z}_2$ leads to the following $OPT_{21}^F + OPT_{22}^F \leq 2 OPT_2^F$.

\noindent {\bf Step 3:} In step 2, there are four $ILP$'s, $\mathtt{Z}_{11}$ and $\mathtt{Z}_{12}$ corresponding to $\mathtt{Z}_{1}$ and $\mathtt{Z}_{21}$ and $\mathtt{Z}_{22}$ corresponding to $\mathtt{Z}_{2}$. Observe that, $\mathtt{Z}_{11}$ is the problem of covering the left end-points of the segments $S_{11}\subseteq S_{1}$ by those unit squares which intersect even indexed lines. Similarly, $\mathtt{Z}_{12}$ is the problem of covering the left end-points of the segments $S_{12}\subseteq S_{1}$ by those unit squares which intersect odd indexed lines.

Now, focus our attention on $\mathtt{Z}_{11}$. Since the squares intersected by an even indexed line $\ell_i$ is independent of the squares intersected by some other even indexed line $\ell_j$ and the points covered by the squares intersected by $\ell_i$ are not covered by the squares intersected by $\ell_j$ ($i \neq j$), we can split $\mathtt{Z}_{11}$ into different {\it ILP}s corresponding to each even indexed line, which can be independently solved. 

Now consider $\mathtt{Z}_{11}^\xi$ corresponding to the line $\ell_\xi$, where $\xi$ is even. 
Let $S_{11}^\xi$ be the set of segments whose left end-points are to be covered by the set of squares $T_{11}^\xi$, which are intersected by the line $\ell_\xi$. We split $S_{11}^\xi$ into disjoint sets $S_{111}^\xi$ and $S_{112}^\xi$, where $S_{111}^\xi$ (resp. $S_{112}^\xi$) are the set of all segments above (resp. below) the line $\ell_\xi$. The objective is to cover the left end-points of the members $S_{111}^\xi$ (resp. $S_{112}^\xi$) by minimum number of squares.  
Thus, the {\it ILP} of $\mathtt{Z}_{11}^\xi$ can be written as, 

\vspace{-5mm}
\begin{align*}
%\hspace{-2.0cm} 
\mathtt{Z}_{11}^\xi: \text{ } \displaystyle{\min \sum_{i\mid t_{i}\in T_{11}^\xi} x_{i}} \hspace{3.3cm}
\end{align*} 
\vspace{-.5cm}
\begin{align*}
 ~~~\text{s.t. } &\sum_{i \mid l(s_k)\in t_{i}\in T_{11}^\xi} x_{i} \geq 1  ~~~\forall~ k\mid s_k \in S_{111}^\xi, ~~\text{ and }  
  \sum_{i \mid l(s_k)\in t_{i}\in T_{11}^\xi} x_{i} \geq 1  ~~~\forall~ k\mid s_k \in S_{112}^\xi 
\end{align*}
\vspace{-.5cm}
\begin{align*}
&~~~x_{i} \in \{0,1\} ~~~~\forall ~ i \mid t_{i}\in T_{11}^\xi  
\end{align*}

\vspace{-0.1in}
Again, since the sets $S_{111}^\xi$ and $S_{112}^\xi$ are disjoint, we may consider 
the following two $ILP$'s, $\mathtt{Z}_{111}^\xi$ and $\mathtt{Z}_{112}^\xi$ as follows.

\vspace{-0.1in}
\begin{minipage}{.5\textwidth}
\begin{align*}
%\hspace{-2.0cm} 
\mathtt{Z}_{111}^\xi: \text{ } \displaystyle{\min \sum_{i\mid t_{i}\in T_{11}} x_{i}} \hspace{3.3cm}
\end{align*} 
\vspace{-.5cm}
\begin{align*}
~~~\text{s.t. } &\sum_{i \mid l(s_k)\in t_{i}\in T_{11}^\xi} x_{i} \geq 1  ~~ \forall~ k\mid s_k \in S_{111}^\xi
\end{align*}
\vspace{-.5cm}
\begin{align*}
&~~~x_{i} \in \{0,1\} ~~~~\forall ~ i \mid t_{i}\in T_{11}^\xi  
\end{align*}
\end{minipage}
\vline
\begin{minipage}{.5\textwidth}
\begin{align*}
%\hspace{-2.0cm} 
\mathtt{Z}_{112}^\xi: \text{ } \displaystyle{\min \sum_{i\mid t_{i}\in T_{11}} x_{i}} \hspace{3.3cm}
\end{align*} 
\vspace{-.5cm}
\begin{align*}
~~~\text{s.t. } &\sum_{i \mid l(s_k)\in t_{i}\in T_{11}^\xi} x_{i} \geq 1  ~~ \forall~ k\mid s_k \in S_{112}^\xi 
\end{align*}
\vspace{-.5cm}
\begin{align*}
&~~~x_{i} \in \{0,1\} ~~~~\forall ~ i \mid t_{i}\in T_{11}^\xi  
\end{align*}
\end{minipage}

Let $\widetilde{x^*}$ be an optimal fractional solution of  $\overline{\mathtt{Z}}_{11}^\xi$. Clearly, $\widetilde{x^*}$ satisfies all the constraints in both $\overline{\mathtt{Z}}_{111}^\xi$ and $\overline{\mathtt{Z}}_{112}^\xi$. Also, it is observe that $OPT_{111}^{\xi F} \leq OPT_{11}^{\xi F}$ and $OPT_{112}^{\xi F} \leq OPT_{11}^{\xi F}$. Combining, we conclude that $OPT_{111}^{\xi F} + OPT_{112}^{\xi F} \leq 2 OPT_{11}^{\xi F}$.

A similar equation can be shown for $\mathtt{Z}_{12}^\xi$ as follows: $OPT_{121}^{\xi F} + OPT_{122}^{\xi F} \leq 2 OPT_{12}^{\xi F}$.

Finally, we have the following four equations, 
\begin{enumerate}
\item $\sum_{\xi \text{ even}}OPT_{111}^{\xi F} + \sum_{\xi \text{ even}} OPT_{112}^{\xi F} \leq 2 \sum_{\xi \text{ even}} OPT_{11}^{\xi F} \leq 2OPT_{11}^{F}$, 
\item $\sum_{\xi \text{ even}}OPT_{121}^{\xi F} + \sum_{ \xi \text{ even}} OPT_{122}^{\xi F} \leq 2 \sum_{ \xi \text{ even}} OPT_{12}^{\xi F} \leq 2OPT_{12}^{F}$, 
\item $\sum_{ \xi \text{ odd}}OPT_{211}^{\xi F} + \sum_{ \xi \text{ odd}} OPT_{212}^{\xi F} \leq 2 \sum_{ \xi \text{ odd}} OPT_{21}^{\xi F} \leq 2OPT_{21}^{F}$, and 
\item $\sum_{ \xi \text{ odd}}OPT_{221}^{\xi F} + \sum_{ \xi \text{ odd}} OPT_{222}^{\xi F} \leq 2 \sum_{ \xi \text{ odd}} OPT_{22}^{\xi F} \leq 2OPT_{22}^{F}$.
\end{enumerate}

\noindent {\bf Step 4:} In this step we apply Lemma \ref{bansal-lemma} independently on each of $\mathtt{Z}_{111}^\xi$, $\mathtt{Z}_{112}^\xi$, $\mathtt{Z}_{121}^\xi$, $\mathtt{Z}_{122}^\xi$ where $\xi$ is even, and $\mathtt{Z}_{211}^\xi$, $\mathtt{Z}_{212}^\xi$, $\mathtt{Z}_{221}^\xi$, $\mathtt{Z}_{222}^\xi$ where $\xi$ is odd to get the following eight equations.
\begin{tabbing}
(vii) \= xxxxxxxxxxxxxxxxxxxxxx\= (viii) \= xxxxxxxxxxxxxxxxxxxxxx\= (iii) \= xxxxxxxxxxxxxxxxxxxxxx \= \kill 
(i)\> $OPT_{111}^{\xi I} \leq 2 OPT_{111}^{\xi F}$, \> (ii) \> $OPT_{112}^{\xi I} \leq 2 OPT_{112}^{\xi F}$, \> (iii) \> $OPT_{121}^{\xi I} \leq 2 OPT_{121}^{\xi F}$, \\
(iv)\> $OPT_{122}^{\xi I} \leq 2 OPT_{122}^{\xi F}$,  \> (v) \> $OPT_{211}^{\xi I} \leq 2 OPT_{211}^{\xi F}$, \> (vi) \> $OPT_{212}^{\xi I} \leq 2 OPT_{212}^{\xi F}$,\\
(vii) \> $OPT_{221}^{\xi I} \leq 2 OPT_{221}^{\xi F}$, \> (viii) \> $OPT_{222}^{\xi I} \leq 2 OPT_{222}^{\xi F}$.
\end{tabbing}

\noindent {\bf Approximation factor:} Now combining all the inequalities of Step 4, we have
\begin{tabbing}
xx\= \kill
\>$\sum_{\xi \text{ even}} OPT_{111}^{\xi I} + \sum_{ \xi \text{ even}} OPT_{112}^{\xi I} + \sum_{ \xi \text{ even}} OPT_{121}^{\xi I} + \sum_{ \xi \text{ even}} OPT_{122}^{\xi I}$\\
$+$\> $\sum_{ \xi \text{ odd}} OPT_{211}^{\xi I} + \sum_{ \xi \text{ odd}} OPT_{212}^{\xi I} + \sum_{ \xi \text{ odd}} OPT_{221}^{\xi I} + \sum_{ \xi \text{ odd}} OPT_{222}^{\xi I}$\\
$\leq$\>$ 2 (\sum_{ \xi \text{ even}} OPT_{111}^{\xi F} + \sum_{ \xi \text{ even}} OPT_{112}^{\xi F} + \sum_{ \xi \text{ even}} OPT_{121}^{\xi F} + \sum_{ \xi \text{ even}} OPT_{122}^{\xi F}$ \\
$+$\> $\sum_{ \xi \text{ odd}} OPT_{211}^{\xi F} + \sum_{ \xi \text{ odd}} OPT_{212}^{\xi F} + \sum_{ \xi \text{ odd}} OPT_{221}^{\xi F} + \sum_{ \xi \text{ odd}} OPT_{222}^{\xi F}) $
\end{tabbing}

$\leq$  $ 16~ OPT_0^I$,  by applying the inequalities of Step 3, Step 2 and Step 1 in this order. 

\begin{theorem}  There exists a factor $16$ approximation algorithm for \textit{DCSUS-ARB} problem that runs in polynomial time.
\end{theorem}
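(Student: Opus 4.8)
The plan is to run the four-step procedure described above verbatim and then establish three things: that the squares it outputs form a cover of $S$, that their number is at most $16\,OPT$, and that the whole computation is polynomial. First I would fix the output of the algorithm to be the union of the eight integral square-sets obtained by rounding the leaf ILPs $\mathtt{Z}_{111}^\xi,\mathtt{Z}_{112}^\xi,\mathtt{Z}_{121}^\xi,\mathtt{Z}_{122}^\xi$ (over even $\xi$) and $\mathtt{Z}_{211}^\xi,\mathtt{Z}_{212}^\xi,\mathtt{Z}_{221}^\xi,\mathtt{Z}_{222}^\xi$ (over odd $\xi$) via Lemma \ref{bansal-lemma}.

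For feasibility, the key chain of observations is the averaging argument used at every split. In Step 1, any fractional optimum of $\overline{\mathtt{Z}}_0$ satisfies $\sum_{i\mid l(s_k)\in t_i\in T_1} x_i + \sum_{j\mid r(s_k)\in t_j\in T_2} y_j \geq 1$ for every $s_k$, so at least one of the two partial sums is $\geq 1/2$; hence $S=S_1\cup S_2$, and covering the left end-points of $S_1$ (the task of $\mathtt{Z}_1$) together with the right end-points of $S_2$ (the task of $\mathtt{Z}_2$) covers all of $S$. The same argument propagates feasibility through Step 2 ($S_1=S_{11}\cup S_{12}$ while $T_{11},T_{12}$ partition $T_1$, using that each unit square meets a unique horizontal line), through Step 3 (the line-by-line split into $\mathtt{Z}_{11}^\xi$ is legitimate because squares cut by distinct lines are independent and cover disjoint endpoint sets, and then $S_{11}^\xi$ splits into its parts above and below $\ell_\xi$), and finally to the eight leaves, each of which is exactly an instance of \textit{Restricted-Point-Cover} once $\ell_\xi$ is taken as the $x$-axis. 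Rounding each leaf ILP to an integral feasible solution and taking the union therefore yields a set of input squares covering $S$.

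For the approximation ratio I would concatenate the inequalities already assembled. Each LP split in Steps 1--3 loses a factor $2$ by the analysis of Gaur et al.\ \cite{Gaur_2002} (a fractional optimum of a parent LP, restricted to a child's constraints, is feasible for that child, and every segment lands in at most one child at each level), and summing over the independent line-problems in Step 3 introduces no double counting since a square meets exactly one line; this yields $\sum OPT^{\xi F}_{\text{leaf}} \leq 8\,OPT_0^F$ after the three levels. Step 4 applies Lemma \ref{bansal-lemma} to each leaf to get $OPT^{\xi I}_{\text{leaf}} \leq 2\,OPT^{\xi F}_{\text{leaf}}$, so the number of output squares is at most $2\cdot 8\,OPT_0^F \leq 16\,OPT_0^I$, and $OPT_0^I$ equals the minimum number of input squares covering $S$, i.e.\ the optimum of \textit{DCSUS-ARB}. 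Polynomial running time follows because the recursion has a constant number of levels, at each level there are polynomially many LPs (Step 3 creates at most $O(n)$ line-problems, each of size $O(n+m)$), every LP is solved in polynomial time, and the rounding of Lemma \ref{bansal-lemma} (the Bansal--Pruhs procedure of \cite{Bansal2010} specialised in Appendix \ref{appendix2}) is itself polynomial.

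I do not expect any single step to be hard; the real effort is the bookkeeping that the per-level losses multiply to exactly $2^4=16$ and that feasibility survives the repeated partitioning when the covering objects are re-indexed relative to different reference lines. In particular I would be careful that Step 2's factor-$2$ loss is charged to $\mathtt{Z}_1$ and $\mathtt{Z}_2$ separately, that the even/odd strip partition keeps $\sum_\xi OPT_{11}^{\xi F}\leq OPT_{11}^F$ (and similarly for the other three families) with no over-counting, and that reconciling the distinct axes used by the various invocations of Lemma \ref{bansal-lemma} does not introduce hidden losses — once these are checked the theorem follows.
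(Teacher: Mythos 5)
Your proposal follows the paper's argument essentially verbatim: the same four-level decomposition (left/right endpoints, even/odd strips, per-line and above/below splits), the same factor-$2$ loss at each level via the Gaur et al.\ averaging argument and Lemma \ref{bansal-lemma} at the leaves, and the same concatenation giving $2^4=16$. Your added explicit checks of feasibility ($S=S_1\cup S_2$ at each split) and of polynomial running time are correct and merely make explicit what the paper leaves implicit.
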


\newpage
\appendix
\section{Appendix}
\subsection{Conflict-free covering of points on a real line using unit intervals} \label{ABC}

Arkin et al \cite{Arkin2015} proved that the discrete version of conflict-free covering problem with intervals where the points are on real line is \np-complete, using  
 a reduction from the vertex cover problem. We slightly  modify this reduction as follows. Let $G(V,E)$ be a given graph with each vertex having distinct $x$-coordinate. Project each vertex $v \in V$ at a point $x_v$ on the $x$-axis such that the distance between each pair of consecutive projections of vertices is strictly greater than 1. Now for each vertex $v \in V$, we take an unit interval $I_v$  such that the mid point of $I_v$ coincides with $x_v$ on $x$-axis. Now, corresponding to each edge $e=(u,v)$, we take a single horizontal line segment connecting $x_u$ and $x_v$. 
 Thus we have an instance of the \textit{DCSUS-ARB} problem where each interval represents a unit square with bottom boundary aligned to that interval. It is easy to show that any feasible solution of the \textit{DCSUS-ARB} problem is the solution of the given vertex cover problem, and vice verse. Thus, we have the following observations.

 \begin{observation}~
 \begin{itemize}
  \item[$\bullet$] \textit{DCSUS-ARB} problem is \np-complete even if the given segments are on a real line.
  \item[$\bullet$] Since the distance between any two consecutive projections $(x_u,x_v)$ is greater than 1, an unit interval cannot cover end-points of segments on both $x_u$ and $x_v$.  Thus in order to cover the segments corresponding to the edges in $G$, we need to choose the unit squares (i.e., unit intervals) from $I_v$s' as stated above. Thus, \textit{CCSUS-ARB} is also \np-complete.
  \item[$\bullet$] Since vertex cover cannot be approximated better than factor $2$ \cite{Khot2008}, both \textit{CCSUS-ARB} and \textit{DCSUS-ARB} cannot be approximated better than factor $2$ even if the segments are on a real line.
 \end{itemize}
\end{observation}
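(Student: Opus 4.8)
The plan is to show that the construction described just above is a parsimonious (objective-preserving) reduction from \textsc{Vertex Cover}, and then read off all three bullets from it. First I would confirm membership in \np: a candidate solution is a subfamily of $\{I_v : v \in V\}$ in the discrete case, or a finite family of unit squares in the continuous case, and verifying that every edge-segment has an endpoint inside some chosen square is clearly polynomial. The heart of the matter is the spacing condition. Since the projections $x_v$ are pairwise more than $1$ apart, any unit interval — equivalently, the horizontal extent of any axis-parallel unit square — contains at most one point $x_v$. The two endpoints of the edge-segment $s_e$ for $e=(u,v)$ are exactly $(x_u,0)$ and $(x_v,0)$, so a square covers $s_e$ if and only if it contains one of these two projected vertices.

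Next I would establish the exact correspondence in the discrete case. Because each available square $I_v$ is centered at $x_v$ with width $1$, it contains only the single projected vertex $x_v$ and therefore covers precisely the edge-segments incident to $v$. Hence a subfamily $\{I_v : v\in V'\}$ covers all edge-segments if and only if every edge has at least one endpoint in $V'$, i.e.\ if and only if $V'$ is a vertex cover of $G$; moreover the two cardinalities coincide. This gives a polynomial-time reduction in which the optimum \textit{DCSUS-ARB} cover and the minimum vertex cover $OPT_{VC}$ have the same size, proving that \textit{DCSUS-ARB} is \np-complete even when the segments lie on a line.

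For the continuous version I would argue that free placement yields no advantage. By the spacing condition, each unit square in any feasible continuous solution still contains at most one $x_v$, so it covers only segments incident to a single vertex; after discarding any square that contains no $x_v$, I associate each remaining square with the unique vertex it covers. The resulting vertex set is a vertex cover of size at most the number of squares used, and conversely the family $\{I_v : v\in V'\}$ realizes any vertex cover $V'$ with exactly $|V'|$ squares. Hence the optimum continuous cover also has size $OPT_{VC}$, establishing \np-completeness of \textit{CCSUS-ARB} on a line.

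Finally, the factor-$2$ lower bound follows because both reductions preserve the objective value exactly: a $\rho$-approximate cover translates verbatim into a vertex cover of size at most $\rho\cdot OPT_{VC}$, and vice versa. Thus any approximation algorithm with ratio $\rho$ for \textit{DCSUS-ARB} or \textit{CCSUS-ARB} would give a $\rho$-approximation for \textsc{Vertex Cover}, so the $(2-\epsilon)$-hardness of \textsc{Vertex Cover} due to Khot and Regev \cite{Khot2008} transfers directly, yielding the claimed factor-$2$ inapproximability. I expect the only delicate point to be the continuous case — namely ruling out a clever off-vertex placement of unit squares that beats the vertex-cover optimum — but this is settled cleanly by the strictly-greater-than-$1$ spacing, which forces every unit square to capture the projection of at most one vertex.
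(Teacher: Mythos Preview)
Your proposal is correct and follows essentially the same line as the paper: the construction from \textsc{Vertex Cover} with projections spaced more than one unit apart, the observation that any unit interval (hence any unit square) can contain at most one $x_v$, and the consequent cost-preserving correspondence between covers and vertex covers. You simply spell out in more detail what the paper sketches, including \np\ membership and the explicit two-way inequality between optima in the continuous case.
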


\subsection{Analysis of the count on the available positions for placing unit squares in a cell of size $k \times k$} \label{appendix1}
For a pair of end-points $\alpha$ and $\beta$ of 
two different segments $s_i$ and $s_j$ respectively, $j \neq i$, with $d_\infty(\alpha,\beta)$ ($L_\infty$ 
distance of $\alpha,\beta$) $\leq 1$), we may have two distinct positions of placing an unit square. These squares 
will be referred to as {\it type-1} positions. However, if for an end-point $\alpha$ of some segment $s_i$, there is no end-point $\beta$ of some other segment such 
that $d_\infty(\alpha,\beta) \leq 1$, then an unit square covering only $s_i$ is required, and it can be placed anywhere containing an end-point of $s_i$. For the 
sake of simplicity we will take a unit-square whose top-left corner is anchored at such a point, and will refer 
such a position as {\it type-2} position. Thus, for each pair of segments $s_i,s_j$ inside $C$, we may get at most two {\it type-1} positions, and 
for each segment $s_i$, we consider two {\it type-2} positions corresponding to the two end-points of $s_i$. Thus, the count stated in the proof of Lemma \ref{ptas-lemma} follows.
\newpage

\subsection{Proof of Lemma \ref{bansal-lemma}} \label{appendix2}
Consider the {\it ILP} formulation of \textit{Restricted-Point-Cover} problem 

\noindent \begin{minipage}{.5\textwidth}
\begin{align*}
%\hspace{-2.0cm} 
\mathtt{Z_{RPC}}: \text{ } \displaystyle{\min \sum_{r \in {\cal R}}  x_r} \hspace{3.3cm}
\end{align*} 
\vspace{-.5cm}
\begin{align*}
 ~~~\text{s.t.} &\sum_{r \mid p\in r ~\&~ r \in {\cal R}} x_r \geq 1,  ~~\forall~ p\in P; ~~~~ x_r\in \{0,1\} ~~\forall ~ r \in {\cal R}
\end{align*}
\vspace{-.3cm}
\end{minipage}

Let us consider the following algorithm. It uses two sets $LB$ and $OUTPUT$ as in Algorithm \ref{algo-fact-3}. 

Procedure {\bf RPC}($P$, $\cal R$)
\begin{itemize}
\item[1.] Consider the point in $P$ having maximum $y$-coordinate.
\item[2.] Identify the subset ${\cal R}_p$ of rectangles in $\cal R$ that cover $p$. 
\item[3.] Choose the $r_\ell \in {\cal R}_p$ having left-most left boundary, and 
$r_r \in {\cal R}_p$ having right-most right boundary.
\item[4.] Include $p \in LB$, and $r_\ell,r_r$ in $OUTPUT$.\\
 Delete all the rectangles ${\cal R}_p$ from $\cal R$; \\
 Delete the subset of points $P_p \subseteq P$ that are covered by 
 the members in ${\cal R}_p$.\\
 This splits (i) the remaining points into disjoint subsets $P_1$ and $P_2$ to the left of  the left boundary of $r_\ell$ and to the right of the right boundary of $r_r$, and (ii) the remaining rectangles into disjoint subsets ${\cal R}^1$ and ${|cal R}^2$ 
 that are to the left of  the left boundary of $r_\ell$ and to the right of the right boundary of $r_r$.
\item[5.] If $P_1 \neq \emptyset$ then call {\bf RPC}($P_1$, ${\cal R}^1$), and \\
   If $P_2 \neq \emptyset$ then call {\bf RPC}($P_2$, ${\cal R}^2$).
\item[6.] Report $OUTPUT$
\end{itemize} 

Observe that, in the optimal solution of the {\it LP} $\overline{\mathtt{Z}}_{RPC}$ corresponding to the {\it ILP} $\mathtt{Z_{RPC}}$, if $p \in LB$ then 
the constraint $\displaystyle{\sum_{r \in {\cal R}_p} x_r \geq 1}$ is satisfied. In other words, the sum of the variables corresponding to the rectangles in ${\cal R}_p$ is at least 1. Also, these variables 
do not occur in the constraint for any other point $p' \in LB$. In order to 
satisfy the constraints corresponding to the points $p \in P\setminus LB$, some 
other variables may be assigned positive values. Thus, $OPT_{RPC}^F \geq LB$.
In our solution of the {\it ILP} $\mathtt{Z_{RPC}}$, we have chosen variables for setting value 1 such that the constraint corresponding to each $p \in LB$ attains the value 2. All other variables are set to 0. Thus, $OPT_{RPC}^I =2LB$. Thus, the lemma is proved.

\end{document}